\newif\ifitcs
\newcommand{\dmnote}[1]{}
\newcommand{\phnote}[1]{}
\newcommand{\arnote}[1]{}
\newcommand{\sriprahladhuvacha}[1]{}
\newcommand{\dmnote}[1]{\todo[color=red!100!green!33, size=\footnotesize]{Dan: #1}}
\newcommand{\phnote}[1]{\todo[color=red!100!green!33, size=\footnotesize]{ph: #1}}
\newcommand{\arnote}[1]{\todo[color=red!20!blue!15, size=\footnotesize]{Alon: #1}}
\newcommand{\sriprahladhuvacha}[1]{\todo[color=red!100!green!33,inline,size=\small]{ph: #1}}
\renewcommand{\@algocf@capt@plain}{above}
\newtheorem{theorem}{Theorem}[section]
\newtheorem*{theorem*}{Theorem}
\newtheorem{lemma}[theorem]{Lemma}
\newtheorem*{lemma*}{Lemma}
\newtheorem{corollary}[theorem]{Corollary}
\newtheorem{proposition}[theorem]{Proposition}
\theoremstyle{definition}
\newtheorem{definition}[theorem]{Definition}
\newtheorem*{notation}{Notation}
\newtheorem{openproblem}[theorem]{Open Problem}
\newcommand{\B}{\{0,1\}}
\newcommand{\bits}{\B}
\newcommand{\NP}{\textsc{NP}}
\newcommand{\coNP}{\textsc{co-NP}}
\newcommand{\TFNP}{\textsc{TFNP}}
\newcommand{\TFUP}{\textsc{TFUP}}
\newcommand{\PPAD}{\textsc{PPAD}}
\newcommand{\PLS}{\textsc{PLS}}
\newcommand{\PPP}{\textsc{PPP}}
\newcommand{\CLS}{\textsc{CLS}}
\newcommand{\UEOPL}{\textsc{UEOPL}}
\newcommand{\PSPACE}{\textsc{PSPACE}}
\newcommand{\FACTOR}{\textsc{Factor}}
\newcommand{\ALLFACTORS}{\textsc{AllFactors}}
\newcommand{\email}[1]{Email: \href{mailto:#1}{\texttt{#1}}}
\title{Downward Self-Reducibility in TFNP}
\author{Prahladh Harsha}{Tata Institute of Fundamental Research, Mumbai, India \and \url{https://www.tifr.res.in/~prahladh/}}{prahladh@tifr.res.in}{https://orcid.org/0000-0002-2739-5642}{Research supported by the Department of Atomic
Energy, Government of India, under project 12-R\&D-TFR-5.01-0500.\flag{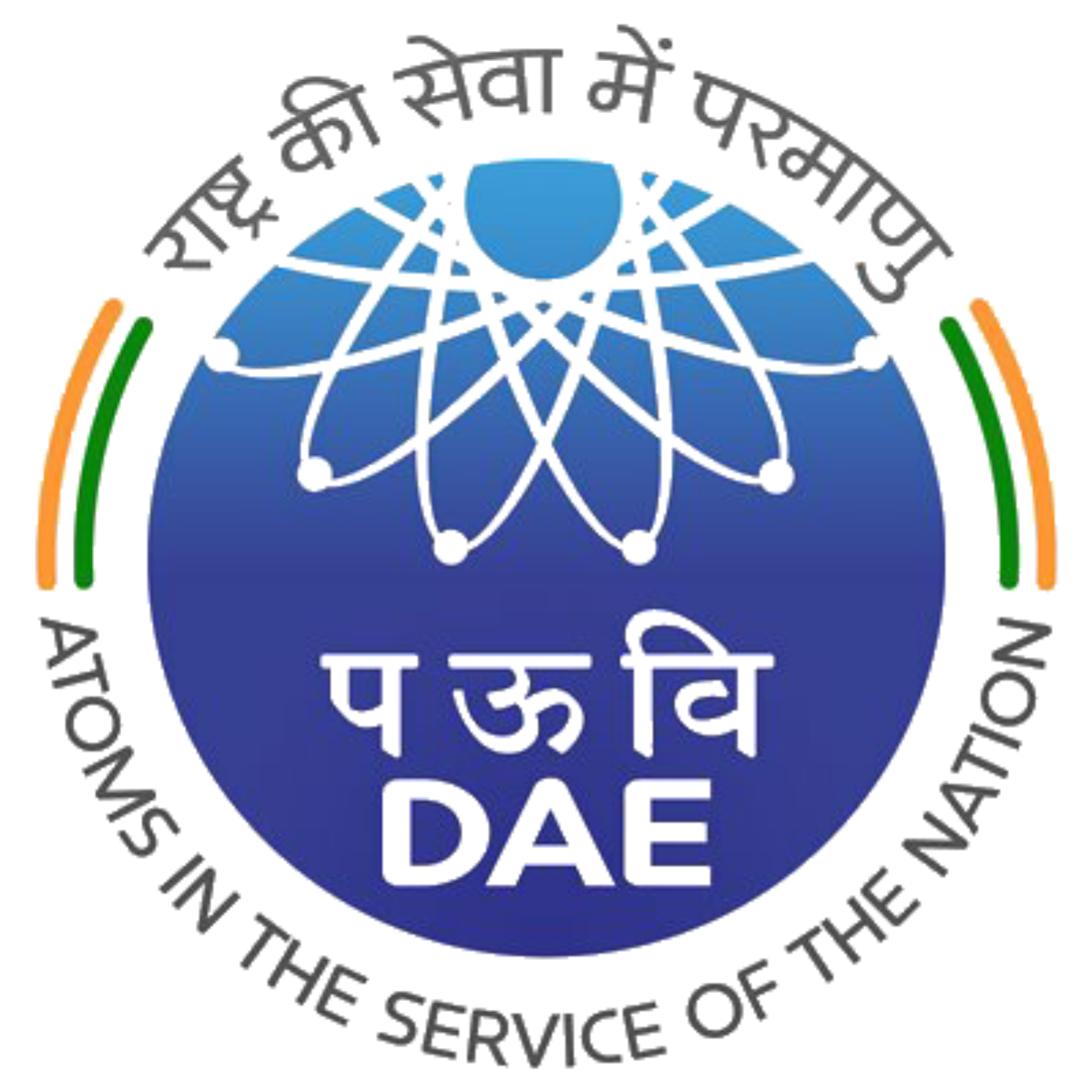}}
\author{Daniel Mitropolsky}{Columbia University, USA \and \url{https://dmitropolsky.github.io/}}{mitropolsky@cs.columbia.edu}{}{Supported in part by the Columbia-IBM center for Blockchain and Data Transparency, by LexisNexis Risk Solutions, and by European Research Council (ERC) under the European Union’s Horizon 2020 research and innovation programme (Grant agreement No. 101019547). \flag{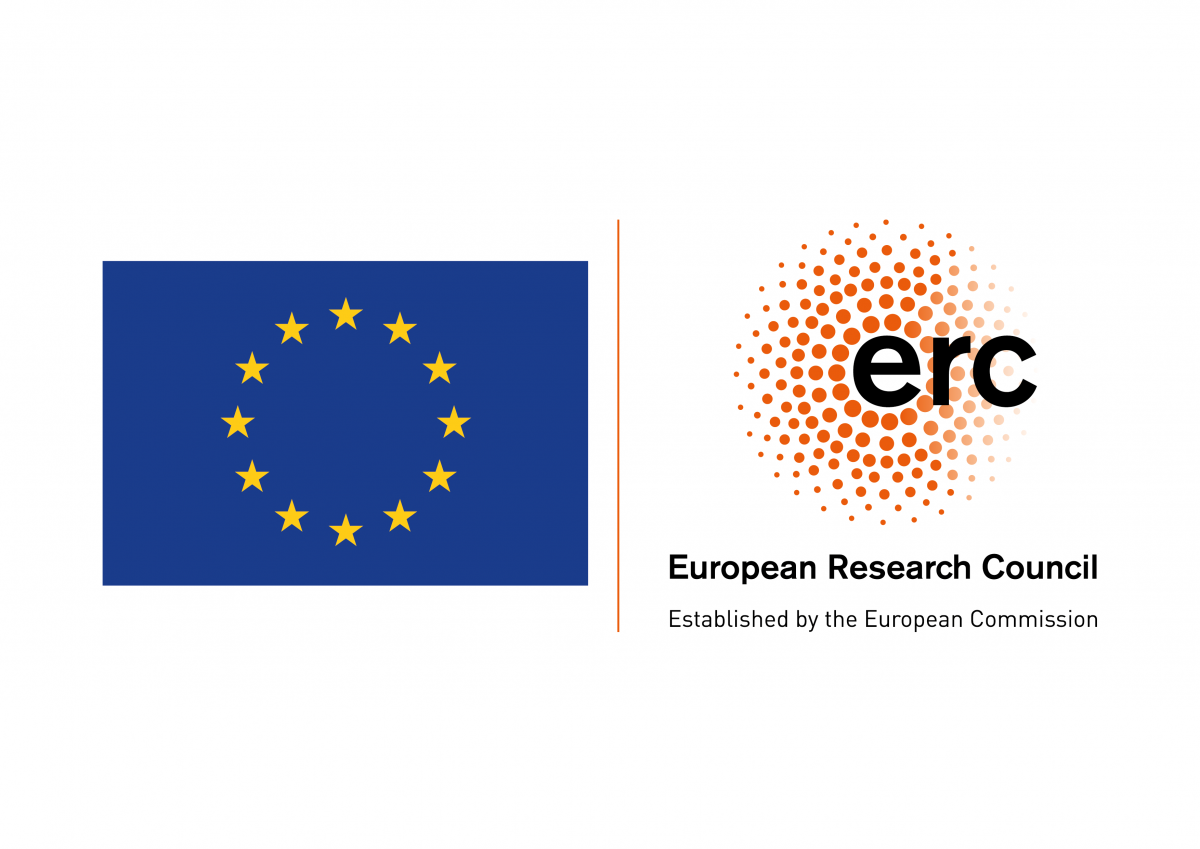}}
\author{Alon Rosen}{Bocconi University, Italy \and Reichman University, Israel \and \url{https://www.alonrosen.net/}}{alon.rosen@unibocconi.it}{https://orcid.org/0000-0002-3021-7150}{Supported by the European Research Council (ERC) under the European Union’s Horizon 2020 research and innovation programme (Grant agreement No. 101019547) and Cariplo CRYPTONOMEX grant. \flag{ERC_EU.png}}
\authorrunning{P. Harsha, D. Mitropolsky, and A. Rosen} 
\keywords{downward self-reducibility, TFNP, TFUP, factoring, PLS, CLS} 
\title{Downward Self-Reducibility in TFNP\thanks{This work was initiated when the first and second authors were visiting the third author at Bocconi University and we are thankful to Boccconi University for their hospitality. We thank Pavel Hub{\'{a}}cek, Eylon Yogev, and Omer Paneth for their comments on an earlier draft of this paper. We also thank the anonymous referees for several useful comments and informing us of \textsc{End-Of-Potential-Line} and \textsc{UniqueEOPL}, the complete problems for the classes \textsc{CLS} and \textsc{UEOPL} respectively, which simplified certain parts of our proof.}}
\author{Prahladh Harsha\thanks{Tata Institute of Fundamental Research, India. \email{prahladh@tifr.res.in}. Research supported by the Department of Atomic
Energy, Government of India, under project 12-R\&D-TFR-5.01-0500.}
    \and 
        Daniel Mitropolsky\thanks{Columbia University, USA. \email{mitropolsky@cs.columbia.edu}. Supported in part by the Columbia-IBM center for Blockchain and Data Transparency, by LexisNexis Risk Solutions, and by European Research Council (ERC) under the European Union’s Horizon 2020 research and innovation programme (Grant agreement No. 101019547).}
    \and 
    Alon Rosen\thanks{Bocconi University, Italy and Reichman University, Israel. \email{alon.rosen@unibocconi.it}. Supported by the European Research Council (ERC) under the European Union’s Horizon 2020 research and innovation programme (Grant agreement No. 101019547) and Cariplo CRYPTONOMEX grant.}
    }
\begin{document}

\maketitle

\begin{abstract}
A problem is \emph{downward self-reducible} if it can be solved efficiently given an oracle that returns solutions for strictly smaller instances. In the decisional landscape, downward self-reducibility is well studied and it is known that all downward self-reducible problems are in \textsc{PSPACE}.  In this paper, we initiate the study of downward self-reducible search problems which are guaranteed to have a solution --- that is, the downward self-reducible problems in \textsc{TFNP}. We show that most natural $\PLS$-complete problems are downward self-reducible and any downward self-reducible problem in \textsc{TFNP} is contained in \textsc{PLS}.
Furthermore, if the downward self-reducible problem is in \textsc{TFUP} (i.e. it has a unique solution), then it is actually contained in \textsc{UEOPL}, a subclass of \textsc{CLS}. This implies that if integer factoring is \emph{downward self-reducible} then it is in fact in \textsc{UEOPL}, suggesting that no efficient factoring algorithm exists using the factorization of smaller numbers.
\end{abstract}

\section{Introduction}

\epigraph{{Perhaps the most surprising thing about Self-Reducibility is its longevity.}}{Eric Allender}

Self-reducibility (sometimes also referred to as auto-reducibility) asks the question if a problem instance is easy to solve if one can solve \emph{other} instances of the problem easily. More precisely, a computational problem is said to be self-reducible if there exists an efficient oracle machine that on input an instance of the problem solves it by making queries to an oracle that solves the same problem, with the significant restriction that it cannout query the oracle on the given instance. First introduced in the context of computability by Trakhtenbrot \cite{Trakhtenbrot1970} more than five decades ago, self-reducibility has proved to be immensely useful in computational complexity theory, especially in the study of decision problems. Various notions of self-reducibility have been studied (for a detailed survey on self-reducibility, the reader is referred to Balc{\'{a}}zar's systematic study \cite{Balcazar1990}, Selke's comprehensive survey \cite{Selke2006} and Allender's survey \cite{Allender2010}).

Downward self-reducibility, the key notion studied in this paper, refers to the kind of self-reducibility wherein the oracle machine is allowed to query the oracle only on instances strictly \emph{smaller} than the given instance. Satisfiability, the prototype $\NP$-complete problem as well as several other related $\NP$-complete problems can be easily shown to be downward self-reducible. This simple observation led to plethora of search-to-decision reductions for NP-complete problems. All $\NP$-complete problems are self-reducible and many natural $\NP$-complete problems are also downward self-reducible. Furthermore, it is known that every downward self-reducible problem is contained in $\PSPACE$ (folklore). Another notion of self-reducibility that is extremely well-studied is random self-reducibility; wherein the oracle machine (which is randomized in this case) is only allowed to query the oracle on \emph{random} instances of the problem, under a natural distribution on the input instance. There are problems in $\PSPACE$ which are known to be both downward self-reducible as well as random self-reducible, in particular permanent, one of the \textsc{\#P}-complete problems.  The downward and random self-reducibility of the permanent and other $\PSPACE$-complete problems led to efficient interactive proofs for $\PSPACE$.

Almost all the study of downward self-reducibility, to date, has been focused in the decisional landscape. In this work, we initiate the study of downward self-reducibility for search problems which are guaranteed to have a solution, that is problems in \TFNP. This is motivated by the oft asked, but yet unresolved, question regarding integer factoring: is factoring an integer any easier if we know how to factor all \emph{smaller} integers? Or equivalently, is factoring downward self-reducible?

\subsection{Our results}

Our focus will be on the class of search problems for which there is guaranteed to be a solution, more concisely referred to as $\TFNP$. Within $\TFNP$, there are several sub-classes of search problems such as $\PLS$, $\textsc{PPAD}$, $\textsc{PPP}$, $\textsc{PPA}$, $\CLS$, for which the totality -- the fact that every instance has an efficiently verifiable solution -- arises from some underlying combinatorial principle. These problems are typically defined in terms of natural complete problems. For instance, the class $\PLS$, which is a subclass of $\TFNP$, containing problems where totality is guaranteed by a ``local-search'' principle, is characterized by its complete problems \textsc{ITER} and \textsc{Sink-of-DAG}. Our first result shows that several of these natural complete problems for $\PLS$ are in fact downward self-reducible.

\begin{theorem} \label{thm:dsr}
\textsc{ITER}, \textsc{Sink-of-DAG} are downward self-reducible.
\end{theorem}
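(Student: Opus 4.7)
The plan is to give an explicit reduction, for each of \textsc{ITER} and \textsc{Sink-of-DAG}, that on an $n$-bit instance calls the oracle only on sub-instances whose parameter is strictly less than $n$. Both reductions use a common ``halving'' idea: split the domain of size $N=2^n$ into a lower half $L=[0,N/2)$ and an upper half $U=[N/2,N)$, and construct up to two sub-instances of domain size at most $N/2$ --- one that searches $L$, and, if $L$ contains no solution, a shifted instance that searches $U$ with its new source chosen at the point where the trajectory enters $U$.

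For \textsc{ITER} with circuit $C:[N]\to[N]$ and promise $C(0)>0$, set $m=\lceil N/2\rceil$ and define $C_L:[m]\to[m]$ by $C_L(x)=\min(C(x),\,m-1)$. A short case analysis on the witness $x^*$ returned by the oracle on $C_L$ shows one of three outcomes: $x^*$ is already a fixed point of $C$, it is already a violation of $C$ (i.e.\ $C(x^*)<x^*$), or $x^*=m-1$ with $C(m-1)\geq m$. In the third case, set $y_0:=C(m-1)$, so $y_0\in U$; evaluate $C(y_0)$ directly to dispose of the trivial fixed-point and violation cases; otherwise construct the shifted upper-half instance $C_U:[N-y_0]\to[N-y_0]$ by $C_U(y)=C(y+y_0)-y_0$ when $C(y+y_0)\geq y_0$, and $C_U(y)=0$ otherwise. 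Because $y_0\geq m$, the new domain has size $\leq m$, so $C_U$ is strictly smaller. Translating a solution $y^*$ of $C_U$ back to $C$ recovers either a fixed point of $C$ in $[y_0,N)$ or a violation of $C$, again by a short case analysis.

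The \textsc{Sink-of-DAG} reduction is structurally identical: the lower-half sub-instance restricts the successor/predecessor circuits $(S,P)$ to $L$, re-routes any edge that would leave $L$ to a dedicated lower-half sink, and preserves the original source $0$; any answer from the oracle is either a legitimate sink or consistency violation of the original DAG, or else a vertex $u^*$ with $S(u^*)\in U$, which then serves as the new source for a shifted upper-half sub-instance of size at most $N/2$. The main care goes into two syntactic points: (i) guaranteeing the promised ``source'' conditions on each sub-instance --- $C_L(0)>0$, $C_U(0)>0$, and the analogous source/predecessor conditions for the Sink-of-DAG sub-instances --- which are settled by cheap direct checks on $O(1)$ specific inputs, and (ii) handling domain sizes that are not exact powers of $2$, which we address by allowing \textsc{ITER} (resp.\ \textsc{Sink-of-DAG}) instances of arbitrary domain size, or equivalently by standard padding that introduces no spurious solutions. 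The only conceptually substantive step in the whole reduction is the upper-half shift: one must verify that a fixed point or violation of $C_U$ in the shifted coordinates pulls back to a fixed point or violation of $C$, which is immediate because translation by $y_0$ preserves the ordering on $[y_0,N)$.
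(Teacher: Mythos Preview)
Your halving approach for \textsc{ITER} is the same idea the paper uses, but the case analysis you give is garbled. A witness $x^*$ returned by the oracle on $C_L$ must satisfy $C_L(x^*)>x^*$, so neither ``$x^*$ is a fixed point of $C$'' nor ``$C(x^*)<x^*$'' can ever occur; and the genuine third case is not ``$x^*=m-1$'' but rather ``$C_L(x^*)=m-1$'', i.e.\ $C(x^*)\ge m-1$ with $x^*<m-1$. The right dichotomy is: either $x^*$ is already an \textsc{ITER} solution for $C$ (one checks $C(C(x^*))\le C(x^*)$), or else one obtains some $y_0\ge m$ with $C(y_0)>y_0$ to serve as the upper-half source. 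With that correction your shifted instance $C_U$ works, and your verification that a $C_U$-solution pulls back is correct. (The paper implements the same halving via input/output bit restrictions and a ``redirect $0^{n-1}$ to $\sigma$'' gadget rather than a shift, but the content is the same.)

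The \textsc{Sink-of-DAG} part has a real gap. First, \textsc{Sink-of-DAG} is specified by a successor circuit $S$ and a \emph{valuation} circuit $V$, not a predecessor circuit $P$; you appear to be describing \textsc{End-of-Line}. More importantly, even with the correct $(S,V)$ formulation, domain-halving does not go through in the way you sketch. Nothing prevents $S$ from mapping the upper interval $[s',N)$ back below $s'$, so the oracle's answer on your shifted upper-half instance can be a spurious sink at a point $z$ with $S(z+s')<s'$; this is neither a genuine sink nor a valuation violation of the original instance, and you are left needing yet another sub-instance on a range that is not smaller. The trajectory can cross the $L/U$ boundary unboundedly many times because $V$, not the domain ordering, governs progress. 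The paper avoids this by halving the \emph{range of $V$} rather than the domain: it first queries $(S,V_{\backslash 1})$ to locate some $\sigma$ with $V(\sigma)\ge 2^{m-1}$, then re-queries with $S$ modified so that every $x$ with $V(x)<V(\sigma)$ is a self-loop. This is the step your plan is missing.

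A smaller point: both your $C_L=\min(C(\cdot),m-1)$ and your $C_U$ (with addition, subtraction, comparison) add gates to the circuit, so the representation size need not decrease. What you are actually establishing is circuit-downward-self-reducibility with polynomial blowup, which is exactly what the paper proves for \textsc{ITER} and \textsc{Sink-of-DAG}; only \textsc{ITER-with-source} is shown to be d.s.r.\ in the strict bit-size sense. Your write-up should make that distinction.
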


This result is similar to the corresponding situation for $\NP$ and $\PSPACE$. Most of the naturally occuring $\NP$-complete and $\PSPACE$-complete problems are downward self-reducible. It is open if \emph{all} $\NP$-complete or $\PSPACE$-complete problems are downward self-reducible. The situation is similar for $\PLS$: it is open if every $\PLS$-complete problem is downward self-reducible.

Given the above result, one may ask if there are problems outside $\PLS$ in $\TFNP$ which are downward self-reducible. Our next result rules this out.

\begin{restatable}{theorem}{dsrinPLS}\label{thm:dsr-in-PLS}
  Every downward self-reducible problem in $\TFNP$ is in $\PLS$.
\end{restatable}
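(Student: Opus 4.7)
The plan is to reduce any downward self-reducible problem $R \in \TFNP$ to a $\PLS$-complete problem such as \textsc{Sink-of-DAG}, by simulating the recursive self-reducing computation in a succinctly encoded call-stack. Let $M$ be the polynomial-time oracle machine witnessing downward self-reducibility of $R$, with running time bounded by $p(n)$ on inputs of size $n$. On an instance $x$ with $|x| = n$, a vertex of the DAG is a pair $(t, \tau)$, where $\tau$ is an explicit stack of frames $(y_1, \sigma_1), \ldots, (y_k, \sigma_k)$ with $y_1 = x$, $|y_i| > |y_{i+1}|$, each $\sigma_i$ a snapshot of $M$'s execution on $y_i$ together with the query-answer pairs already received, and $t$ is a step counter. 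Because instances strictly shrink up the stack we have $k \leq n$, and because $M$ runs in polynomial time each frame has polynomial size, so the whole vertex fits in $\mathrm{poly}(n)$ bits.

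The successor circuit $S$ advances the computation by one elementary step: if the top frame is in an internal state, it takes one step of $M$; if the top frame has just issued a query on some $y'$, it pushes a new frame $(y', \sigma_0)$; if the top frame has terminated with output $z$ and is not the bottom frame, it is popped and $z$ is handed down as the answer to the pending query in the frame below; and if the bottom frame itself has terminated, $S$ self-loops, making the vertex a sink. In each non-sink case $t$ is incremented by $1$, so the potential $\phi(t, \tau) := t$ strictly increases along every non-self-loop edge. Starting from the canonical source $s_0 := (1, ((x, \sigma_0)))$, iterating $S$ must terminate at a sink encoding a completed run of $M^R(x)$, whose recorded output is a valid solution to $R(x)$ by correctness of the self-reduction.

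The subtle point is ensuring that \emph{every} sink of the succinctly described DAG (not merely the one reachable from $s_0$) encodes a valid solution to $R(x)$, since a $\PLS$ algorithm is free to return any local optimum. To enforce this, I would bake into the notion of ``well-formedness'' both structural consistency (correct bottom instance $x$, strictly shrinking sizes, valid $M$-transitions) and the semantic requirement that every recorded query-answer pair $(y', z')$ has $z'$ a valid $R$-solution to $y'$, which is polynomial-time checkable since $R \in \TFNP$. For any ill-formed vertex $v$ one sets $S(v) := s_0$ and $\phi(v) := 0$ (with $\phi(s_0) := 1$), which preserves strict increase of $\phi$ along non-self-loop edges while leaving only well-formed terminal configurations as sinks; any such sink has completed $M^R$ on $x$ using only valid sub-answers, hence records a valid output.

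The main obstacle I anticipate is precisely this bookkeeping: verifying that well-formedness, $S$, and $\phi$ can all be implemented by polynomial-size circuits, that the counter's bit length stays polynomial (it does, since an easy induction gives $t \leq p(n)^{n+1}$, requiring only $O(n \log p(n))$ bits), and that no ill-formed vertex accidentally becomes a sink or a potential violator. Once these details are pinned down, the construction is a polynomial-time reduction from $R$ to \textsc{Sink-of-DAG}, placing $R$ in $\PLS$.
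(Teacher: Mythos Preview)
Your proposal is correct and follows essentially the same approach as the paper: encode the depth-first recursive execution of the self-reduction as nodes of a \textsc{Sink-of-DAG} instance whose unique well-formed sink is the terminal configuration containing a solution, and neutralize ill-formed states so they can never be returned as solutions. The paper's bookkeeping differs only cosmetically---it stores a table of (sub-instance, solution) pairs indexed by recursion depth and call index rather than an explicit call stack with a micro-step counter, and it isolates invalid states as self-loops rather than redirecting them to the source---but the core idea is identical.
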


That is, just as $\PSPACE$ is the ceiling for downward self-reducible problems in the decisional world, $\PLS$ contains all downward self-reducible problems in $\TFNP$. An immediate corollary of these two results is that a downward self-reducible problem in $\TFNP$ is hard iff a problem in $\PLS$ is hard.

We then ask if we can further characterize the downward self-reducible problems in $\TFNP$: what about search problems which are guaranteed to have an \emph{unique} solution, more concisely denoted as $\TFUP$. In this case, we show that the above containment of d.s.r problems in $\TFNP$ in the class $\PLS$ can be strengthened if the search problem is guaranteed to have an unique solution: the problem then collapses to \textsc{UEOPL}, a subclass of $\CLS$ (itself a sub-class of $\PLS$), which is one of the lowest subclasses known in the $\TFNP$ hierarchy.

\begin{restatable}{theorem}{tfup}\label{thm:tfup-cls}
Every downward self-reducible problem in $\TFUP$ is in $\UEOPL$.
\end{restatable}

  We now return to the question on downward self-reducibility of integer factoring. Let $\FACTOR$ denote the problem of finding a non-trivial factor of the given number if the number is composite or outputting ``prime'' if the number is prime and $\ALLFACTORS$ be the problem of listing \emph{all} the non-trivial factors of a given number if it is composite or outputting ``prime'' if the number is prime. Clearly, $\ALLFACTORS$ is in $\TFUP$. It is not hard to check that $\ALLFACTORS$ is downward self-reducible iff $\FACTOR$ downward self-reducible. Combining this with the above result, we get that factoring collapses to $\UEOPL$ if factoring is downward-self reducible. This result can be viewed as evidence against the downward self-reducibility of factoring.

\begin{restatable}{corollary}{factor}\label{cor:factor}
If \textsc{Factor} or \textsc{AllFactors} is downward self-reducible, then \textsc{Factor} and \textsc{AllFactors} $\in \UEOPL$. 
\end{restatable}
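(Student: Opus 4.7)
The plan is to derive the corollary from Theorem~\ref{thm:utfnp-cls} in three short steps: first place $\ALLFACTORS$ in $\UTFNP$, next establish that the downward self-reducibility of $\FACTOR$ and $\ALLFACTORS$ are equivalent (as already foreshadowed in the preceding paragraph of the excerpt), and finally transfer $\CLS$-membership from $\ALLFACTORS$ to $\FACTOR$.

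For the first step, notice that on input $n$ the output of $\ALLFACTORS$ is uniquely determined (for instance, represent it as the sorted multiset of prime factors, with the special symbol ``prime'' when $n$ is prime), so $\ALLFACTORS \in \UTFNP$.

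For the equivalence, I would set up two symmetric simulations. Suppose $\FACTOR$ is d.s.r.\ via an algorithm $A_F$ whose oracle queries on input $n$ are all of the form $\FACTOR(m)$ with $m<n$. To build a d.s.r.\ algorithm for $\ALLFACTORS$ on $n$, I run $A_F(n)$ and answer each query $\FACTOR(m)$ by calling the $\ALLFACTORS$-oracle on $m$ and returning any element of its output; this yields a non-trivial factor $p$ of $n$, after which a single further call to $\ALLFACTORS(n/p)$ completes the factorization. Conversely, if $\ALLFACTORS$ is d.s.r.\ via some $A_{AF}$, then on input $n$ to $\FACTOR$ I run $A_{AF}(n)$ and answer each query $\ALLFACTORS(m)$ with $m<n$ by iteratively invoking the $\FACTOR$-oracle on $m$, $m/p_1$, $m/(p_1 p_2)$, and so on, all of which remain strictly below $n$; the depth of this simulation is $O(\log n)$, so it runs in polynomial time. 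Finally, I output any factor from the list returned by $A_{AF}$. The only substantive thing to verify is that every oracle query issued by either simulation is on an input strictly smaller than the outer $n$, which follows because all intermediate numbers are proper divisors of some $m<n$.

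For the final step, assuming either (hence, by the equivalence, both) of $\FACTOR$ and $\ALLFACTORS$ is downward self-reducible, Theorem~\ref{thm:utfnp-cls} applied to $\ALLFACTORS$ yields $\ALLFACTORS \in \CLS$. Since $\FACTOR$ many-one reduces to $\ALLFACTORS$ in polynomial time (output any element of the returned list, or ``prime''), and $\CLS$ is closed under polynomial-time many-one reductions, $\FACTOR \in \CLS$ as well. I do not anticipate a serious obstacle here: the heavy lifting is already done in Theorem~\ref{thm:utfnp-cls}, and what remains is essentially bookkeeping to confirm that the d.s.r.\ structure transfers cleanly between the two problems.
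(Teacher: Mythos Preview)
Your proposal is correct and follows essentially the same approach as the paper: establish the equivalence of downward self-reducibility between $\FACTOR$ and $\ALLFACTORS$ via the two mutual simulations you describe, apply \cref{thm:utfnp-cls} to $\ALLFACTORS\in\UTFNP$, and then pull $\FACTOR$ into $\CLS$ via the trivial reduction. The only cosmetic point is that ``strictly smaller than the outer $n$'' should be read as strictly smaller \emph{bit length}, which does hold since every intermediate query is on a divisor of some $m$ with $|m|<|n|$.
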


\subsection{Related work}
The class $\TFNP$ was introduced by Meggido and Papadmitriou \cite{MegiddoP1991}. In 1994, Papadimitriou initated the study of syntactic subclasses of $\TFNP$, i.e., search problems where totality is guaranteed by combinatorial principles, including $\PPAD$ \cite{Papadimitriou1994}.

In the original 1990 paper, Meggido and Papadimitriou proved that if $\TFNP$ contains an $\NP$-hard problem, then the polynomial hierarchy collapses to first level, i.e. $\NP = \coNP$. Since this is generally believed not to be true, $\TFNP$ has emerged as an important potential source of ``intermediate"  problems -- that is, problems that are not in $\textsc{P}$ but not $\NP$-complete. Since our work gives a new characterization of $\PLS$-hardness, it is related to work that shows $\PLS$-hardness from cryptographic assumptions, such as in \cite{BitanskyPR2015,GargPS2016,HubacekY2020}. In addition to hardness from cryptographic assumptions, Bitansky and Gerichter recently showed unconditional hardness of $\PLS$ relative to random oracles, as well as hardness of $\PLS$ from incrementally verifiable computation (though an instantiation of this is given through cryptographic assumptions) \cite{BitanskyG2020}. 

The connection between factoring and $\TFNP$ has been studied by Buresh-Oppenheim \cite{Buresh-Oppenheim2006} and later by Jerabek \cite{Jerabek2016}. Their work shows that factoring is reducible to $\textsc{PPA}~\cap~\PPP$. It is known that $\PPAD \subseteq \textsc{PPA}~\cap~\PPP$, but whether factoring is in $\PPAD$ is an open problem. This paper makes progress in the study of factoring and $\TFNP$ by showing that if a downward self-reduction for factoring exists, factoring is not only in $\PPAD$ but also in $\CLS$. $\CLS$ was introduced by Daskalakis and Papadimitriou in 2011 \cite{DaskalakisP2011}, and in a recent breakthrough result, shown to be exactly the intersection $\PPAD~\cap~\PLS$ \cite{FearnleyGHS2021}. Recently, it was also shown that $\CLS$ is hard relative to a random oracle but with the additional assumption of \#SAT hardness \cite{ChoudhuriHKPRR2019-fs}. More recently, in concurrent and independent work to ours, it was shown that $\UEOPL$ (and hence $\CLS$) hardness can be based on any downward self-reducible problem family that additionally satisfied a randomized batching property, and for which the Fiat-Shamir heuristic protocol is unambiguously sound \cite{BitanskyCHKLPR22}. This complements our results in interesting ways; we base hardness of $\UEOPL$ on any problem in unique $\TFNP$ that is d.s.r., while theirs is based on {\em any} problem that is d.s.r. (not necessarily total) but requires an additional batching property and an assumption about the Fiat-Shamir heuristic for interactive protocols for that problem. The reductions in \cref{thm:dsr,thm:dsr-in-PLS} are reminiscent of the reductions in \cite{ChoudhuriHKPRR2019-ppad} and \cite{EphraimFKP2020} in their reliance on downward self-reducibility. The major differences between these works and our work are 
(i) these results work with specific d.s.r. problems (viz., sumcheck of low-degree polynomials and iterated squaring) while we work with any d.s.r. problem in $\TFNP$; and
(ii) these results rely on problems ``outside'' of $\NP$ while we rely on problems in $\TFNP$.
This allows us to do away with non-interactive proofs/arguments that the aforementioned works rely on (since a solution of a $\TFNP$ problem is itself a proof) and generalise/abstract out the assumption (i).

\subsection*{Organization} 

The rest of this paper is organized as follows. We first begin with some preliminaries in \cref{sec:prelim} where we recall the definitions of various subclasses of problems in $\TFNP$ and the notion of downward self-reducibility. In \cref{sec:dsr-problems}, we show that some of the classical $\PLS$-complete problems such as \textsc{ITER}, \textsc{Sink-of-Dag} are downward self-reducible, thus proving \cref{thm:dsr}. Next, in \cref{sec:dsrinPLS}, we show that every downward self-reducible problem in $\TFNP$ is contained in $\PLS$, thus proving \cref{thm:dsr-in-PLS}. In \cref{sec:tfup}, we observe that if the search problem is in $\TFUP$, then the reduction in \cref{sec:dsrinPLS} can be strengthened to show that the problem is contained in $\CLS$ (thus, proving \cref{thm:tfup-cls}). This immediately yields \cref{cor:factor} on the consequence of the downward self-reducibility of factoring. We finally conclude with some closing thoughts in \cref{sec:disc}.

\section{Preliminaries}\label{sec:prelim}

\begin{notation}
Throughout this paper $||$ denotes concatenation of strings. We impose an ordering on strings in $\bits^n$ lexicographically, e.g. for $x=1011$ and $y=1001$, $x > y$. Sometimes we will write $[T]$ to represent the subset of strings $\bits^{\lceil\log(T)\rceil}$ that are at most $T$ as binary numbers.
\end{notation}

\subsection{Search Problems}
We begin by recalling the definitions of search problems, $\TFNP$, and its subclasses.

\begin{definition}[Search problem] A search problem is a relation $R \subset \{0,1\}^* \times \{0,1\}^*$. We will variously write $xRy$, $(x,y) \in R$ and $R(x,y) = 1$ to mean that $(x,y)$ is in the relation $R$, or equivalently that $y$ is a solution to $x$ in the search problem $R$. A Turing Machine or algorithm $M$ solves $R$ if for any input $x$, $M(x)$ halts with a string $y$ on the output tape such that $(x,y) \in R$.
\end{definition}

\begin{definition}[NP search problem] A search problem $R$ is an NP search problem if (1) there exists a polynomial $p(\cdot)$ such that for every $x \in \{0,1\}^*$, the set $Y_x = \{y \in \B^*~:~(x,y)\in R\} \subseteq \B^{p(n)}$ and (2) there exists a polynomial time Turing Machine $M$ such that for all pairs of strings $x,y$, $M(x,y)=1 \iff (x,y) \in R$. The class of NP search problems is denoted FNP.
\end{definition}

\begin{definition}[TFNP] A search problem $R$ is \emph{total} if for every $x \in \B^*$, there exists at least one $y$ such that $(x,y)\in R$. The class of total NP search problems is called $\TFNP$.
\end{definition}

\subsection{Local Search and Continuous Local Search}

The study of total complexity is primarily interested in problems where totality -- the fact that for any input there is a solution (that can be efficiently verified) -- is the result of some combinatorial principle. A fundamental subclass of this kind is $\PLS$, which intuitively captures problems whose totality is guaranteed by a ``local search"-like principle (that is, the principle that if applying some efficient step increases an objective but the objective is bounded, a local optimum must exist).

\begin{definition}[\textsc{Sink-of-DAG}]
Given a successor circuit $S:\B^n\rightarrow\B^n$ such that $S(0^n) \neq 0^n$, and a valuation circuit $V:\B^n\rightarrow [2^m-1]$ find $v \in \B^n$ such that $S(v) \neq v$ but $S(S(v)) = S(v)$ (a sink), or $V(S(v)) \leq V(v)$.
\end{definition}

It is not hard to show that \textsc{Sink-of-DAG}~$\in\TFNP$: since the range of $V$ is finite, successively applying $S$ must result in a sink of $S$, or a local maximum of $V$ (note that exponentially many applications of $S$ may be required to find this solution, so we cannot say that $\PLS$ is solvable in polynomial time). Any problem that is polynomial-time reducible to \textsc{Sink-of-DAG} would also be in $\TFNP$.

\begin{definition}[PLS \cite{JohnsonPY1988}]
The class $\PLS$, for \emph{polynomial local search}, consists of all search problems in $\TFNP$ that are polynomial-time reducible to \textsc{Sink-of-DAG}.
\end{definition}

In this paper, we will be interested in several other complete problems in $\PLS$.

\begin{definition}[\textsc{ITER} \cite{Morioka2005}] Given a successor circuit $S:\B^n\rightarrow\B^n$ such that $S(0^n) > 0^n$, find $v \in \B^n$ such that $S(v) > v$, but $S(S(v)) \leq S(v)$.
\end{definition}

\begin{definition}
\textsc{ITER-with-source}: Given a successor circuit $S:\B^n\rightarrow\B^n$ and a source $s \in \B^n$ such that $S(s) > s$, find $v \in \B^n$ such that $S(v) > v$, but $S(S(v)) \leq S(v)$.
\end{definition}

Analogously to \textsc{ITER} versus \textsc{ITER-with-source}, one can define \textsc{Sink-of-DAG-with-source}, where instead of the guarantee that $S(0^n) \neq 0^n$, we are given an additional input $s \in \B^n$ such that $S(s) \neq s$. The following is folklore:

\begin{theorem} \label{thm:cdsr-problems}
All of \textsc{ITER}, \textsc{ITER-with-source}, \textsc{Sink-of-DAG}, \textsc{Sink-of-DAG-with-source} are $\PLS$-complete.
\end{theorem}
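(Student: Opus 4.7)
The plan is to exhibit polynomial-time reductions in both directions among all four problems; since $\PLS$ is defined to consist exactly of problems reducible to \textsc{Sink-of-DAG}, establishing this web of reductions yields $\PLS$-completeness for all four. The direction $\textsc{ITER} \le \textsc{Sink-of-DAG}$ is immediate by setting the valuation $V(v) := v$ read as an integer, so that an \textsc{ITER} solution is exactly a \textsc{Sink-of-DAG} solution under this identity valuation. The with-source variants contain their basic counterparts by taking $s = 0^n$, so $\textsc{Sink-of-DAG} \le \textsc{Sink-of-DAG-with-source}$ and $\textsc{ITER} \le \textsc{ITER-with-source}$.

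For the opposite direction $\textsc{ITER-with-source} \le \textsc{ITER}$ (and analogously for \textsc{Sink-of-DAG}), I would prepend a single high-order bit: on $\B \times \B^n$, set $S'(0, 0^n) := (1, s)$, declare $(0, v)$ a fixed point for every $v \neq 0^n$, and on the upper layer define $S'(1, v) := (1, S(v))$. Since $(1, s) > (0, 0^n)$ in lex order, the \textsc{ITER} source condition holds, and any \textsc{ITER} solution of $S'$ must live in the upper layer and pulls back directly to an \textsc{ITER-with-source} solution of the original instance. The remaining reduction $\textsc{Sink-of-DAG} \le \textsc{ITER-with-source}$ is the substantive one. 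Given $(S, V)$ with $V: \B^n \to [2^m - 1]$, I would first dispose of the degenerate case: if $V(S(0^n)) \le V(0^n)$, then $0^n$ is already a \textsc{Sink-of-DAG} solution. Otherwise I build an \textsc{ITER-with-source} instance on $\B^{m+n}$ whose nodes are pairs $(\alpha, v) \in \B^m \times \B^n$ ordered lexicographically with $\alpha$ as the high-order coordinate. Every ``garbage'' node with $\alpha \neq V(v)$ is declared a fixed point; on consistent nodes, set $S'(V(v), v) := (V(S(v)), S(v))$ when $V(S(v)) > V(v)$, and $S'(V(v), v) := (V(v), v)$ otherwise. The source is $s' := (V(0^n), 0^n)$; the standing assumption $V(S(0^n)) > V(0^n)$ guarantees $S'(s') > s'$. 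Any \textsc{ITER-with-source} solution is necessarily consistent (garbage is fixed), so it takes the form $(V(v), v)$ with $V(S(v)) > V(v)$ and $V(S(S(v))) \le V(S(v))$, producing a \textsc{Sink-of-DAG} solution at $S(v)$.

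The principal obstacle is the final reduction $\textsc{Sink-of-DAG} \le \textsc{ITER-with-source}$, because \textsc{ITER} forces the node ordering to be the fixed lexicographic order on the domain, whereas \textsc{Sink-of-DAG} permits an arbitrary valuation $V$. The trick is to absorb $V$ into the names of the nodes by making it the high-order coordinate of each label, so that ascent in $V$ coincides with ascent in lex order on the encoded domain, while the garbage region of fixed points ensures that no spurious \textsc{ITER} solutions arise outside the intended $V$-consistent walk from $(V(0^n), 0^n)$.
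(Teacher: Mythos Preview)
Your proposal follows essentially the same route as the paper: the same cycle of reductions (\textsc{ITER} to \textsc{Sink-of-DAG}, the with-source and without-source versions to each other, and \textsc{Sink-of-DAG} back to \textsc{ITER-with-source} via the $(V(v),v)$ encoding with garbage nodes made fixed). Your treatment of \textsc{ITER-with-source} $\le$ \textsc{ITER} by prepending a high-order bit is in fact cleaner than the paper's sketch, and you correctly dispose of the degenerate case $V(S(0^n)) \le V(0^n)$ that the paper glosses over.

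Two small corrections. First, in \textsc{ITER} $\le$ \textsc{Sink-of-DAG} with $V := \mathrm{id}$, it is \emph{not} true that a \textsc{Sink-of-DAG} solution is automatically an \textsc{ITER} solution: the oracle may return some $v$ with $S(v) < v$, which satisfies the valuation clause $V(S(v)) \le V(v)$ but fails $S(v) > v$. The standard fix is to first replace $S$ by $\hat S$ with $\hat S(v) = S(v)$ if $S(v) > v$ and $\hat S(v) = v$ otherwise; then any non-fixed vertex strictly increases, and a \textsc{Sink-of-DAG} solution for $(\hat S,\mathrm{id})$ pulls back to an \textsc{ITER} solution for $S$. (The paper's choice $V' = S$ has the same defect and needs the same patch.) Second, in your final reduction the recovered \textsc{Sink-of-DAG} solution is not always $S(v)$: if $S(S(v)) = S(v)$ then $S(v)$ is a fixed point and the solution is $v$ (sink type), otherwise it is $S(v)$ (valuation type). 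The paper states this as ``either $v$ or $S(v)$.''
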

\begin{proof}
\begin{itemize}
\item \textbf{\textsc{ITER}~$\rightarrow$~\textsc{Sink-of-DAG}:} Given an instance $S$ of \textsc{ITER}, let $S'=S$ and $V'=S$; a solution $v$ of $(S',V')$ is trivially a solution of $S$.
\item \textbf{\textsc{Sink-of-DAG}~$\rightarrow$~\textsc{ITER}:} Given an instance $(S,V)$ of \textsc{Sink-of-DAG}, let $S':\B^{2n}\rightarrow \B^{2n}$ be the circuit computing $S'(V(x)||x)) = V(S(x))||S(x)$ and $S'(y||x) = y||x$ if $y\neq V(x)$, and source $s' = V(0^n)||0^n$. A solution $v$ of $S'$ must satisfy that either $S(S(v))=S(v)$, or $V(S(S(v)) \leq S(v)$, so either $v$ or $S(v)$ is a solution to $S$.
\item \textbf{\textsc{X}~$\rightarrow$~\textsc{X-with-source}, where \textsc{X}~$\in$~\{\textsc{ITER}, \textsc{Sink-of-DAG}\}:} to specify the additional argument $s$, let $s=0^n$.
\item \textbf{\textsc{X-with-source}~$\rightarrow$~\textsc{X}, where \textsc{X}~$\in$~\{\textsc{ITER}, \textsc{Sink-of-DAG}\}:} replace the successor circuit $S$ of either problem with $S'$ that computes $S'(0^n) = s$ and $S'(x) = x$ for $x\neq 0^n$.
\end{itemize}
\end{proof}

Another fundamental class in the study of total complexity is $\PPAD$, which similar to $\PLS$ is a sink-(or source)-finding problem, but where the existence of sinks is not guaranteed by a consistently increasing valuation, but rather by the presence of a predecessor circuit. In both $\PLS$ and $\PPAD$, a principle (either the increasing valuation or the predecessor circuit) ensures that loops are not possible along a path.
\begin{definition}[\textsc{End-of-line}]
In \textsc{End-of-Line} (or \textsc{EOL}) problem, we are given a \emph{successor} circuit $S:\bits^n\rightarrow \bits^n$ and a \emph{predecessor} circuit $P:\bits^n\rightarrow \bits^n$ such that $S(0^n) \neq 0^n$ and $P(0^n) = 0^n$ and we need to output either another \emph{source} node ($v$ s.t. $S(v) \neq v$ and $P(v) = v$) or a \emph{sink} node ($v$ s.t. $P(v) \neq v$ but $S(v) = v$).
\end{definition}

\begin{definition}[$\PPAD$ \cite{Papadimitriou1994}]
The class $\PPAD$, for \emph{polynomial parity arguments on directed graphs}, consists of all search problems in $\TFNP$ that are polynomial-time reducible to \textsc{End-of-Line}.
\end{definition}

In this paper we will be interested in the class $\CLS$, which was originally defined with respect to a \emph{continuous} analogue of $\PLS$ and $\PPAD$. However, it has recently been shown to have a much simpler complete problem which is the natural unification of $\PLS$ and $\PPAD$:




\begin{definition}[\textsc{End-of-Potential-Line} \cite{FearnleyGMS2020}] Given a \emph{successor} circuit $S:\bits^n\rightarrow \bits^n$ and a \emph{predecessor} circuit $P:\bits^n\rightarrow \bits^n$ such that $S(0^n) \neq 0^n$ and $P(0^n) = 0^n$, as well as a valuation circuit $V:\B^n\rightarrow [2^m-1]$, find either
\begin{itemize}
\item a ``$\PPAD$-like solution": $v$ s.t. $S(v)\neq v = P(v)$ or $P(v)\neq v = S(v)$
\item a ``$\PLS$-like solution": $v$ s.t. $S(v)\neq v$, $P(S(v))=v$ and $V(S(v)) \leq V(v)$.
\end{itemize}
\end{definition}

\begin{definition}[$\CLS$, defined in \cite{DaskalakisP2011}, completeness of \textsc{End-of-Potential-Line} proved in \cite{GoosH0MPRT2022}]
The class $\CLS$, for \emph{continuous local search}, is the class of total search problems polynomial time reducible to \textsc{End-Of-Potential-Line}.
\end{definition}

A recent result surprised the $\TFNP$ community:
\begin{theorem}[\cite{FearnleyGHS2021}]
$\CLS = \PLS \cap \PPAD$.
\end{theorem}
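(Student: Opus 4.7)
The easy direction $\CLS \subseteq \PLS \cap \PPAD$ is the one already observed by Daskalakis and Papadimitriou. For $\CLS \subseteq \PLS$, given an instance $(f,p,\lambda,\epsilon)$ of \textsc{Continuous-local-opt}, one can discretize $[0,1]^3$ at resolution $\delta \approx \epsilon/\lambda$ and define a successor circuit that, on a grid point $x$, moves to the closest grid point to $f(x)$, using $p$ (suitably rescaled to an integer valuation) as the \textsc{Sink-of-DAG} potential; any local maximum of $p$ on the grid is, up to $\epsilon$, an approximate local optimum of the original instance, and an $f$-step that decreases the discretized $p$ by much yields a Lipschitz violation. The inclusion $\CLS \subseteq \PPAD$ is the Sperner-style reduction carried out in \cite{DaskalakisP2011}.

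For the hard direction $\PLS \cap \PPAD \subseteq \CLS$, the plan is to go through an intermediate total problem \textsc{EndOfPotentialLine}: given an \textsc{EOL} instance $(S,P)$ on $\bits^n$ together with a valuation circuit $V:\bits^n \to [2^m]$, find either an \textsc{EOL} solution, or a pair $(v,S(v))$ with $P(S(v)) = v$, $S(v) \neq v$ but $V(S(v)) \leq V(v)$ (a potential violation). The first step is to show that any problem in $\PLS \cap \PPAD$ reduces to \textsc{EndOfPotentialLine}. Given $\Pi \in \PLS \cap \PPAD$, one has a reduction to \textsc{Sink-of-DAG} yielding $(S_1, V_1)$ and a reduction to \textsc{EOL} yielding $(S_2, P_2)$; the trick is to take a \emph{product} instance on $\bits^{n_1+n_2}$ whose edges are simultaneous steps of both underlying graphs (with appropriate padding so a solution of either original instance pulls back to a solution of $\Pi$), so that $V_1$ becomes a valid potential along $P_2$-consistent edges.

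The core of the proof is the second step: reducing \textsc{EndOfPotentialLine} to \textsc{Continuous-local-opt}. The plan is to embed the combinatorial graph given by $(S,P,V)$ into the unit cube $[0,1]^3$ as a union of non-self-intersecting ``pipes.'' The three-dimensional ambient space is essential: it gives enough room to route disjoint directed paths of the \textsc{EOL} graph without forcing crossings. Concretely, each vertex $v \in \bits^n$ is assigned a small cubelet whose center is determined by $v$, and a $P$-consistent edge $(v,S(v))$ is realized by a thin piecewise-linear tube. The displacement field $f:[0,1]^3 \to [0,1]^3$ is built by an arithmetic circuit that, inside a tube, advances a point along the tube's direction and, outside the tubes, is the identity. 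The potential $p:[0,1]^3 \to [0,1]$ is defined to interpolate $V$ along each tube (normalized to $[0,1]$) and to be constant off the tubes. One then verifies that an approximate local optimum of $p$ under $f$ must occur either at the end of a tube (an \textsc{EOL} sink/source) or at a point where $p$ decreases along an edge (a $V$-violation), and that all constructions fit into a well-behaved arithmetic circuit with only logarithmically many multiplications.

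The main obstacle, by a wide margin, is this continuous embedding. The arithmetic circuits $f$ and $p$ must be globally $\lambda$-Lipschitz for a polynomially bounded $\lambda$, they must be specified by well-behaved arithmetic circuits (so, only $O(\log |C|)$ multiplications), and the geometry has to route exponentially many potential pipes without collisions while still allowing a polynomial-time verifier to recover a discrete solution from any continuous one. Managing Lipschitz constants at the tube junctions, and engineering the pipe layout from the binary labels $v \in \bits^n$ using only addition, $\min$, $\max$, and a few multiplications, is the delicate combinatorial-geometric heart of \cite{FearnleyGHS2021}; everything else is bookkeeping.
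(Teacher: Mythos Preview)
The paper does not prove this theorem at all; it is quoted as a black-box result from \cite{FearnleyGHS2021}, so there is no in-paper proof to compare your attempt against.

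On the substance of your sketch: it neither matches the actual argument of \cite{FearnleyGHS2021} nor stands on its own. Fearnley, Goldberg, Hollender, and Savani do not pass through \textsc{EndOfPotentialLine}. Their route is to show that a \emph{two}-dimensional KKT / gradient-descent problem is in $\CLS$ and then prove it $\PLS\cap\PPAD$-hard by a direct reduction from the canonical complete problem ``given one \textsc{Sink-of-DAG} instance and one \textsc{End-of-Line} instance, solve either''; the technical core is building a single piecewise-polynomial potential on $[0,1]^2$ whose KKT points decode to a solution of one of the two discrete instances. Your three-dimensional pipe embedding is closer in spirit to the earlier reductions placing \textsc{EOPL}/\textsc{SVL} inside $\CLS$, not to the hard direction here.

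The genuine gap is your first step. Taking a ``product'' of $(S_1,V_1)$ and $(S_2,P_2)$ with simultaneous steps and potential $V_1$ does not produce a valid \textsc{EndOfPotentialLine} reduction: a potential-violation solution in the product yields $V_1(S_1(u))\le V_1(u)$ at a point where $P_2(S_2(v))=v$, which is not a \textsc{Sink-of-DAG} solution unless you also know $S_1(u)\neq u$, and the two underlying paths can have incomparable lengths so ``appropriate padding'' is doing unspecified work. The containment $\PLS\cap\PPAD\subseteq\textsc{EOPL}$ was in fact only established \emph{after} \cite{FearnleyGHS2021} (by G\"o\"os, Hollender, Jain, Maystre, Pires, Robere, and Tao), and not via a product construction of this kind; so your outline inverts the historical and logical order.
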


We think of \textsc{End-of-Line} and \textsc{End-of-Potential-Line} instances as defining an ordered ``line" consisting of the vertices $0^n,S(0^n),S^2(0^n),\ldots$. If there were exactly one ``line" in the instance, the solution to these problems would be unique. We will be interested in a version of \textsc{End-of-Potential-Line} that allows an additional kind of solution, namely one which demonstrates the existence of at least two lines.

\begin{definition}[\textsc{UniqueEOPL} \cite{FearnleyGMS2020}] Given an instance $(S,P,V)$ of \textsc{End-of-Potential-Line} with vertex set $\bits^n$, find a solution to the original problem \emph{or} two points $u~\neq~v\in\bits^n$ such that $x \neq S(x)$, $y \neq S(y)$ and either $V(x)=V(y)$ or $V(x) < V(y) < V(S(x))$.
\end{definition}

Observe that either $V(x)=V(y)$ or $V(x) < V(y) < V(S(x))$ in the new type of solution above implies the existence of at least two ``lines" in the graph represented by $(S,P,V)$. This problem leads to the definition of a natural subclass of $\CLS$: 

\begin{definition}[\textsc{UEOPL} \cite{FearnleyGMS2020}]
The class \textsc{UEOPL} is the class of total search problems polynomial time reducible to \textsc{UniqueEOPL}.
\end{definition}

\begin{proposition}[\cite{FearnleyGMS2020}]
\textsc{UEOPL} is a subclass of $\CLS$.
\end{proposition}

\subsection{Downward-self-reducibility}\label{sec:dsr}
\begin{definition}
A search problem $R$ is \emph{downward self-reducible} (d.s.r) if there is a polynomial time oracle algorithm for $R$ that on input $x \in \B^n$ makes queries to an $R$-oracle of size strictly less than $n = |x|$.
\end{definition}

In this definition of downward self-reducible, the number of \emph{bits} of the input must be smaller in oracle calls. Indeed, both \textsc{TQBF} and \textsc{SAT} are d.s.r.\ in this sense by the usual trick of instantiating the first quantified variable $x_1$, \emph{and simplifying the resulting Boolean expression} to remove $x_1$ entirely. However, there is a related notion of downward self-reducibility which can be defined to certain classes of structured problems or languages, where the \emph{number of variables} of the instance decreases in oracle calls, even if the number of bits of the sub-instances does not decrease. A circuit problem is one where inputs are always circuits $C:\B^n \rightarrow \B^m$ under some fixed representation of circuits, and solutions are inputs to the circuit of the form $s \in \B^n$. 

\begin{definition}
A circuit problem is circuit-d.s.r. if there is a polynomial time oracle algorithm for $R$ that on input $C:\B^n \rightarrow \B^m$ makes queries to an $R$-oracle on circuits with $\nu \leq n$ input bits, and $\mu \leq m$ output bits, and $\nu+\mu < n+m$ (at least one of $\nu$ or $\mu$ is smaller than $n$ or $m$ respectively).
\end{definition}

Throughout the rest of the paper we will only consider circuit problems where the size of the input circuit $|C|$ is at least $n$ and at most polynomial in $n$ and $m$; this way an algorithm's complexity is equivalent as a function of the number of inputs and outputs or of the actual input size (that is, $|C|$).
\begin{definition}
We say that a circuit problem is circuit-d.s.r.\ \emph{with polynomial-blowup} if  it is circuit d.s.r and furthermore on input $C$ with $n+m$ inputs and outputs, it makes oracle queries on circuits of size at most $|C|+\text{poly}(nm)$.
\end{definition}

The reader might  immediately ask: what is the relation between these notions of self-reducibility? It might seem one is a weaker version of the other, but this does not seem to be the case. In fact, the authors initially believed this was indeed the case, but despite this intuition, attempts at proving it failed! 
For circuit problems in $\TFNP$, it would be interesting if there any connection between these two different notion of downward self-reducibility.

\section{Downward self-reducibility of PLS-complete problems} \label{sec:dsr-problems}

In this section, we initiate the study of downward self-reducibility in $\TFNP$ by showing that of the four (4) canonical $\PLS$-complete problems, three (3) are circuit-d.s.r.\ and one is both circuit-d.s.r.\ and d.s.r. Showing that a circuit problem is d.s.r.\ is trickier, as it requires not only reducing the input and output bits but also \emph{shrinking the entire circuit representation}.
\begin{theorem} \label{thm:ITER}
\textsc{ITER-with-source} is d.s.r.\ and circuit-d.s.r.\ with polynomial blowup.
\end{theorem}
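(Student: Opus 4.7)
The plan is to split $\B^n$ along the leading bit into a lower half $L = \{0\} \times \B^{n-1}$ and an upper half $H = \{1\} \times \B^{n-1}$, and reduce to at most two \textsc{ITER-with-source} sub-instances of width $n-1$, one per half. Define $S_L : \B^{n-1} \to \B^{n-1}$ by $S_L(x) = y$ when $S(0 \cat x) = 0 \cat y$ and $y > x$, and $S_L(x) = x$ otherwise; define $S_H$ analogously on $H$. Both restricted successors are pointwise $\geq \mathrm{id}$, so any oracle answer $v$ satisfying $S_L(S_L(v)) \leq S_L(v)$ forces $v' := S_L(v)$ to be a fixed point of $S_L$. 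The value of this design is that fixed points of $S_L$ translate cleanly back to $S$: either $S(0 \cat v')$ stays in $L$ at a value $\leq 0 \cat v'$, or it escapes to $H$; and symmetrically for $S_H$.

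Next I would describe the two-phase reduction. If $s \in H$, then $S(s) > s$ forces $S(s) \in H$ as well, so $(S_H, s_H)$ is a valid sub-instance and a single oracle call yields $1 \cat v_H$ as a solution via the fixed-point argument (the fixed point $v'_H$ of $S_H$ means $S(1 \cat v'_H)$ either drops into $L$, hence $< 1 \cat v'_H$, or stays in $H$ at a value $\leq 1 \cat v'_H$; either way $1 \cat v_H$ satisfies the \textsc{ITER} condition). If instead $s \in L$ and $S(s)$ also lies in $L$, a first oracle call on $(S_L, s_L)$ returns $v$ with $v' = S_L(v)$ a fixed point of $S_L$; inspecting $S(0 \cat v')$ either certifies $0 \cat v$ as a solution (when it stays in $L$) or exposes a known predecessor $0 \cat v'$ of an escape vertex $w \in H$. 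In every branch ending with such an escape $w \in H$ (including the subcase $S(s) \in H$ directly), a single evaluation of $S(w)$ either finishes the problem (if $S(w) \leq w$ the predecessor is a solution) or supplies a valid source $w_H$ for a second oracle call on $(S_H, w_H)$, whose answer $v_H$ yields $1 \cat v_H$ by the same fixed-point argument. In total the reduction runs in polynomial local time with at most two oracle calls, each on an $(n-1)$-width sub-instance.

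This is already enough for circuit-d.s.r.\ with polynomial blowup: $S_L$ and $S_H$ are built from $S$ by pinning the leading input bit and appending $O(n)$ gates for the comparison $y > x$ and a multiplexer, so $|S_{L,H}| \leq |S| + \mathrm{poly}(n)$ with input and output widths both dropping from $n$ to $n-1$. The main obstacle will be the stronger bit-length d.s.r.\ claim, since in the naive circuit encoding the $O(n)$-gate wrapper inflates the description by far more than the single bit saved by a shorter source. I would close this gap with a compact encoding that treats the wrapper as a constant-size decoration of $S$---one bit selecting the half together with a short fixed tag for the standard pin-comparison-multiplex template---yielding a sub-instance of size $|S| + (n-1) + O(1)$ versus the original $|S| + n$, a strict bit-length decrease. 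The separate source field of the ``with-source'' variant is precisely what provides the bit of headroom needed to absorb this decoration, and is why the d.s.r.\ claim is stated here rather than for plain \textsc{ITER}.
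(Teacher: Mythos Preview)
Your high-level split (lower half $L$, upper half $H$, at most two oracle calls) matches the paper's, and your case analysis translating fixed points of $S_L,S_H$ back to solutions of $S$ is correct. The difference is in how you build the sub-circuits: you append a comparator and multiplexer so that $S_L, S_H \geq \mathrm{id}$ pointwise, which makes the back-translation clean and is perfectly fine for circuit-d.s.r.\ with polynomial blowup.

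The gap is in the bit-length d.s.r.\ claim. Your proposed fix---encoding the $O(n)$-gate wrapper as a ``constant-size decoration'' so the sub-instance has length $|S| + (n-1) + O(1)$---does not work: the oracle solves \textsc{ITER-with-source} on its \emph{fixed} input encoding (an ordinary circuit description plus a source string), and you cannot unilaterally switch to a tagged encoding just for the recursive calls. Whatever bits you hand the oracle must themselves describe a circuit in the same representation as $S$, and with your wrapper that description has $|S| + \Theta(n)$ bits. (Even granting the encoding, note your own arithmetic requires the ``$O(1)$'' tag to be shorter than the single bit you save on the source.)

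The paper sidesteps this by \emph{not adding a wrapper at all}. It takes the pure restriction $S_b = (S^{1\to b})_{\backslash 1}$---fix the top input bit to $b$, drop the top output bit---and then \emph{propagates} the constant through the gates, which strictly shrinks the circuit description under any reasonable DAG encoding. The price is that $S_b$ need not be $\geq \mathrm{id}$, so an oracle answer $w$ on $(S_0, s')$ may be a ``false'' solution when $S(0\cat w)$ or $S(S(0\cat w))$ escapes into $H$; the paper detects these few cases in post-processing and extracts either a genuine solution or a new source $\sigma \in H$ for the second call. The idea you are missing for d.s.r.\ is to tolerate false solutions and filter them afterward, rather than spend circuit size preventing them.
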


The following definition will be useful for the proof of \cref{thm:ITER}:
\begin{definition}
Given an (acyclic) circuit $C:\B^n\rightarrow \B^m$, the input restriction of $C$ on input bit $i \in [n]$ to value $b \in \B$, denoted $C^{i\rightarrow b}$, is the circuit obtained by setting the input $i$ to $b$, and then removing input bit $i$ by ``propagating" its value through the circuit (the exact process is given following this definition). Similarly, the output restriction of $C$ formed by excluding output bit $j \in [m]$, denoted $C_{\backslash j}$, is defined by removing the $j$-th output bit and propagating the effect throughout the circuit.
\end{definition}

Propagating a restriction is intuitive, but the details are given for completeness. For an input restriction $C^{i\rightarrow b}$, initially, input bit $i$ is replaced by the constant value $b$. In polynomial time the circuit can be partitioned into layers $1,\ldots,L$ such that a gate in later $\ell$ has inputs in layers $\ell'< \ell$ and layer $1$ are the input bits. For each layer $\ell=2,\ldots,L$, we modify every gate $G$ in layer $\ell$ that has as at least one input that was replaced by a constant in an earlier layer:
\begin{itemize}
\item if $G =$ NOT and its input was replaced with constant $\beta$, replace $G$ with constant $\lnot\beta$;
\item if $G \in~$\{AND,OR\} and both input bits $\iota,\kappa$ were replaced with constants, replace $G$ with the value $(\iota~G~\kappa$);
\item if $G =$ AND and only input bit $\iota$ was set to constant $\beta$, if $\beta=0$ replace $G$ by $0$, and if $\beta=1$ remove $G$ by replacing any outgoing wires from $G$ into layers $\ell'>\ell$ with outgoing wires from $\kappa$;
\item if $G =$ OR and only input bit $\iota$ was set to constant $\beta$, if $\beta=1$ replace $G$ by $1$, and if $\beta=0$ remove $G$ by replacing any outgoing wires from $G$ into layers $\ell'>\ell$ with outgoing wires from $\kappa$.
\end{itemize}

For an output restriction $C_{\backslash j}$ we start by removing output bit $j$, and then for each layer $\ell \in \{L-1,\ldots,1\}$, remove any gate which feeds only into gates that were removed in previous layers. Both input and output restrictions can be constructed in polynomial time. Both kinds of restriction have a smaller circuit representation in any reasonable representation of circuits\footnote{We note that is possible to come up with a ``pathological" circuit representation such that restrictions with propagation of this kind result in a longer circuit representation, even though the number of nodes in the DAG encoded by the representation has decreased.}, that is, $|C^{i\rightarrow b}| < |C|$ and $|C_{\backslash j}| < |C|$.

\begin{proof}[Proof of \cref{thm:ITER}] The key observation is that we can first query a restriction of $S$ to the first half of inputs (that is, those beginning with a $0$) and either obtain a sink, or else a string whose successor is in the second half of inputs (those beginning with $1$). With this new source, we can reduce the problem to finding a sink in the second half of inputs.

Concretely, given an instance $(S,s)$ such that $S:\B^n\rightarrow\B^n$ and $S(s) > s$, define $S_0 = (S^{1\rightarrow 0})_{\backslash 1}$ and $S_1=(S^{1\rightarrow 1})_{\backslash 1}$, i.e., the input restriction with the first bit set to $0$ or $1$ respectively, and the output restriction obtained by excluding the first output bit $1$. These restrictions can be constructed efficiently and are necessarily smaller in total representation than $S$, and $S_0(x) = S(0||x)_{1:n-1}$, $S_1(x) = S(1||x)_{1:n-1}$ where the subscript denotes returning that range of bits.

Now, if $s > 2^{n-1}$, then $(S_1,s_{1:n-1})$ defines a valid smaller instance of \textsc{ITER-with-source} and given solution $w$ to $(S_1,s_{1:n-1})$, $v = 1||w$ is a solution to $(S,s)$. Otherwise, assuming $s < 2^{n-1}$, we query the oracle for $(S_0,s_{1:n-1})$. Any solution $w$ of this sub-instance for which $S(0||w) < 2^{n-1}-1$ and $S(S(O||w)) < 2^{n-1}-1$ directly yields a solution for $S$, namely $v=0||w$. Otherwise, there are several possible kinds of ``false" solutions. In each case we will have either a solution to $S$, or otherwise a vertex $\sigma \in \B^n$ such that $\sigma > 2^{n-1}$ such that $S(\sigma) > \sigma$ and in which case we query $(S_1,\sigma_{1:n-1})$ for solution $z$, and output $v = 1||z$.

Again, if $w$ is the solution returned by the downward \textsc{ITER} oracle on $(S_0,s_{1:n-1})$:
\begin{enumerate}
\item If $S(0||w) \geq 2^{n-1}$, then either $S(S(0||w)) < S(0||w)$ (and then $0||w$ is a solution for $(S,s)$) or we set $\sigma = S(0||w)$ and continue as above.
\item If $S(0||w) < 2^{n-1}$ but $S(S(0||w)) \geq 2^{n-1}$, then either $S(S(S(0||w))) < S(S(0||w))$ (in which case $S(0||w)$ is a solution for $(S,s)$), or set $\sigma = S(S(0||w))$.
\end{enumerate}

Note, that in addition to shrinking the circuit size, the oracle sub-instances have one less input and output bit, so \textsc{ITER-with-source} is also circuit-d.s.r.\ with sub-constant blowup.
\end{proof}

\begin{theorem}
\textsc{ITER}, \textsc{Sink-of-DAG} and \textsc{Sink-of-DAG-with-source} are circuit-d.s.r.\ with polynomial blowup.
\end{theorem}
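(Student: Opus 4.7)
The plan is to reduce all three problems to cases we have already handled, using the splitting template from \cref{thm:ITER} for \textsc{Sink-of-DAG-with-source}, and leveraging the local circuit modifications in \cref{thm:cdsr-problems} for the two ``without-source'' variants. For \textsc{Sink-of-DAG-with-source}, given $(S, V, s)$ with $S : \B^n \to \B^n$ and $V : \B^n \to [2^m-1]$, I would form the restricted circuits $S_b = (S^{1 \to b})_{\backslash 1}$ and $V_b = V^{1 \to b}$ for $b \in \B$; each pair $(S_b, V_b)$ has one fewer input bit and strictly smaller total size, and together with a suitable source $s'$ satisfying $S_b(s') \neq s'$ it constitutes a legitimate smaller \textsc{Sink-of-DAG-with-source} query. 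Starting with the half containing $s$, I would hand the oracle that sub-instance and analyse its returned solution $w$. As in \cref{thm:ITER}, either the $S$-successor of $b||w$ stays in the same half, in which case $w$ lifts directly to a sink or local-max of $(S, V, s)$, or it crosses into the opposite half, in which case I would check whether the crossing itself already certifies a solution of $(S, V, s)$ (because $V$ decreased across the jump, or because the crossed-to vertex is an $S$-fixed point), and otherwise take that vertex as a new source for the opposite half and recurse.

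For \textsc{ITER} and \textsc{Sink-of-DAG} (without explicit source), I would observe that every such instance is simultaneously a ``with-source'' instance with source $0^n$, so I can run the algorithm of \cref{thm:ITER} or of the previous paragraph on $(S, 0^n)$ (resp.\ $(S, V, 0^n)$). Each resulting oracle query $(S', s')$ (resp.\ $(S', V', s')$) with $s' \neq 0^{n-1}$ is then converted back to the ``without-source'' form via the reduction \textsc{X-with-source} $\to$ \textsc{X} from \cref{thm:cdsr-problems}: replace $S'$ by $S''$ with $S''(0^{n-1}) = s'$ and $S''(x) = S'(x)$ otherwise, and any solution of the patched instance yields a solution of the original sub-instance (when $s' = 0^{n-1}$ no conversion is needed). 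This patch is an $O(n)$-sized multiplexer, so the queried circuit has the same input/output bit count as $S'$ (hence strictly fewer than $S$) and size $|S'| + O(n)$, preserving the polynomial blowup.

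I expect the main technical obstacle to be the case analysis for \textsc{Sink-of-DAG-with-source}, since a sink returned by the sub-instance oracle can originate either from a true $S$-fixed point within the chosen half or from a local maximum of the projected valuation, and each interacts differently with $S$-edges that jump across halves. Once those subcases are untangled, the remaining verifications are routine: each recursion layer reduces each circuit's input bits by one and removes one $S$-output bit, while the restriction propagation and the $0^n$-patch add only $O(n + m)$ gates per layer, giving the claimed polynomial blowup.
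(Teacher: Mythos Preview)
Your treatment of \textsc{ITER} is fine and matches the paper: run the with-source algorithm with $s=0^n$ and convert the second-half query back into a sourceless instance by hard-wiring a $0^{n-1}\mapsto\sigma$ redirect, at $O(n)$ extra gates.

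The gap is in your handling of \textsc{Sink-of-DAG-with-source}. The split-on-first-input-bit template from \cref{thm:ITER} works for \textsc{ITER} only because of the monotone structure of that problem: once the path crosses into the half with leading bit $1$, the condition $S(v)>v$ forces it to remain there, so two oracle calls suffice. In \textsc{Sink-of-DAG} there is no ordering on the vertex set; the valuation $V$ is an arbitrary circuit. After you cross into the opposite half with new source $\sigma$ and query $(S_{1-b},V_{1-b},\sigma)$, the returned solution can again be ``false'' with a crossing back into half $b$ at an even higher $V$-value, and this can repeat. Each bounce strictly increases $V$, but $V$ ranges over $[2^m-1]$, so you may need exponentially many oracle calls before terminating. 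Neither the implicit claim ``two calls suffice, as in \cref{thm:ITER}'' nor ``recurse until done'' yields a polynomial-time self-reduction. (A secondary issue: even your ``stays in the same half'' case is incomplete, since a sink of $S_b$ may arise because $S(S(b\|w))$ crosses while $S(b\|w)$ does not; this is fixable by the same case split the paper uses for \textsc{ITER}, but the bouncing problem is not.)

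The paper avoids this entirely by reducing on a different parameter: it never restricts input bits for \textsc{Sink-of-DAG-with-source}, and instead drops the top \emph{output} bit of the valuation, setting $V'=V_{\backslash 1}$. One oracle call to $(S,V',s)$ either gives a genuine solution (when $V(w),V(S(w))<2^{m-1}$) or produces a vertex $\sigma$ with $V(\sigma)\ge 2^{m-1}$ and $S(\sigma)\ne\sigma$. For the second call the successor is modified so that every $x$ with $V(x)<V(\sigma)$ becomes a fixed point; then any solution to $(S',V',\sigma)$ lives in the region where the dropped top bit of $V$ is always $1$, hence irrelevant, and lifts directly. This uses exactly two oracle calls, each with one fewer output bit, and $O(m)$ gate overhead for the comparator. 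The sourceless \textsc{Sink-of-DAG} then adds precisely the $0^n$-redirect gadget you already described.
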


For the notion of circuit-d.s.r.\ to make sense for \textsc{Sink-of-DAG} and \textsc{Sink-of-DAG-with-source}, which as defined have several input circuits with the same domain, we think of the inputs to these problems as a single circuit which reuses the same input bits and concatenates the outputs. This works because although these problems are defined with multiple circuit inputs, a solution is a single input (that satisfies some relation depending on the circuits). That is, to be circuit-d.s.r.\ the oracle sub-instances must decrease the input size for \emph{each} circuit, or the output size for \emph{any} circuit.

\begin{proof}

\begin{enumerate}
\item \textbf{ITER}: We follow roughly the same algorithm as in the proof of \cref{thm:ITER} but with $s=0^n$; when we query $S_0 = (S^{1\rightarrow 0})_{\backslash 1}$ we either obtain a solution to the original instance, or a vertex $\sigma \in \B^n$ such that $\sigma \geq 2^{n-1}$ and $S(\sigma) > \sigma$. The problem is that we cannot just input-restrict the first bit to $1$, since the source of this new circuit must be $0^{n-1}$ and $S_1(0^{n-1}) = S(2^{n-1})$ and it may not be that $\sigma = 2^{n-1}$; therefore, we modify $S_1 = (S^{1\rightarrow 1})_{\backslash 1}$ to ``redirect" the input $0^{n-1}$ to the new source by adding a component that checks if the input is $0^{n-1}$, and if so feeds a hardcoded $\sigma_{1:n-1}$ into the rest of the circuit, and otherwise feeds in the input unchanged. This gadget can be implemented with $O(n)$ additional gates, so \textsc{ITER} is circuit-d.s.r with polynomial-blowup. 

\item \textbf{Sink-of-DAG-with-source}: Let $V' = V_{\backslash 1}$ be the output restriction of the first bit; we query $(S,V')$ and obtain solution $w$. If both $V(w) < 2^{m-1}$ and $V(S(w)) < 2^{m-1}$ then $w$ is a solution to the original instance; otherwise, this gives us a $\sigma \in \B^n$ such that $V(\sigma) \geq 2^{m-1}$ and $S(\sigma) \neq \sigma$. For the next subquery, the valuation circuit will again be $V'$, but we also use a modified successor circuit $S'$: on input $x\in \B^n$ it checks whether $V(x) < V(\sigma)$ and if so outputs $x$, and otherwise outputs $S(x)$. We also set $s' = \sigma$. The modification to $S'$ ensures no $x$ for which $V(x) < V(\sigma)$ can be a solution to $(S',V',s')$. Since this algorithm only ever decreases the \emph{output} bits of of the valuation circuit in oracle calls, it is necessary to check that it can be solved when output bits cannot be decreased any further, that is when $m=1$; indeed, in that case the problem is trivial can be found by applying $S$ just once. $S'$ needs to be implemented carefully to ensure polynomial-blowup: instead of copying $V$, the output bits of $V$ are moved down into a non-output layer, and $S'$ compares these bits to the hardcoded $S(\sigma)$ ($n-1$ of these bits are also reused directly to compute $V'$). This gadget can be implemented with $O(m)$ additional overhead.

\item \textbf{Sink-of-DAG}: The construction is exactly as above, except $S'$ has an additional component which checks if $x=0^n$, and if so feeds a harcoded $\sigma$ into the rest of the circuit for $S'$ and $V'$. This gadget can be implemented with $O(n)$ additional overhead.\qedhere
\end{enumerate}
\end{proof}

A natural question is whether every $\PLS$-complete problem is d.s.r.\ (or circuit d.s.r.\ if it is a circuit problem). The difficulty with answering this question is that, as evidenced in this section, downward self-reducibility seems to depend on the details of the input representation of the problem. This is analogous to the case of downward self-reducibility in decision problems: SAT, and more generally TQBF, are known to be d.s.r., but the property is not known to be true for \emph{all} $\NP$- or $\PSPACE$-complete problems. The same ideas to show any NP-complete problem has a search-to-decision self-reduction does not work when considering \emph{downward} self-reducibility, either in the decisional landscape or in $\TFNP$. If, say the reduction from SAT to $A$ squares the input length, it is not clear how to use downward self-reducibility of SAT to get it for $A$. 

\begin{openproblem}
Is every $\PLS$-complete problem downward self-reducible?
\end{openproblem}

\section{Downward self-reducible problems are in \PLS}\label{sec:dsrinPLS}

In the previous section we proved that the canonical $\PLS$-complete problems are downward self-reducible. The natural question is whether the reverse is true. Indeed, we are able to show the following theorem (restated from the introduction): 

\dsrinPLS*

The ``moral" of these results (\cref{thm:ITER,thm:dsr-in-PLS} that some $\PLS$-complete problems are d.s.r.\ and d.s.r.\ problems are in $\PLS$) is the following corollary. An equivalent one exists in the decisional landscape, namely, that a hard language is d.s.r.\ iff there is a hard language in $\PSPACE$.

\begin{corollary}  \label{cor:dsr}
A problem in $\PLS$ is hard if and only if a downward-self reducible problem in $\TFNP$ is hard. Here, ``hard" can mean worst-case hard, or average-case hard with respect to some efficiently sampleable distribution.
\end{corollary}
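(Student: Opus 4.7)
The plan is to reduce an arbitrary downward self-reducible $R\in\TFNP$ directly to \textsc{Sink-of-DAG}. Let $A$ be the d.s.r.\ oracle algorithm for $R$, running in time $T(n)\le n^c$ and issuing queries only on instances of size strictly smaller than its input. Given $x$ of size $n$, the reduction will build polynomial-size circuits $S,V$ whose \textsc{Sink-of-DAG} solutions encode a solution to $R$ on $x$. Conceptually, a state of the \textsc{Sink-of-DAG} instance represents a snapshot of a depth-first recursive simulation of $A(x)$ in which each oracle query is resolved by pushing a new frame and running $A$ on the smaller sub-instance.

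Concretely, a state is a triple $(c,\,\Sigma,\,\mathsf{status})$. Here $\Sigma$ is a stack of at most $n$ frames $(y_1,\sigma_1),\ldots,(y_k,\sigma_k)$, each recording a sub-instance $y_i$ and the full internal state $\sigma_i$ of $A(y_i)$ (program counter, work tape, and oracle answers already received), and $c$ is a counter. The initial state has $c=0$ and a single frame $(x,\mathsf{init})$. The successor $S$ advances the topmost frame $(y_k,\sigma_k)$ by one step of $A$: a query on $y'$ with $|y'|<|y_k|$ pushes $(y',\mathsf{init})$; if $A(y_k)$ halts with output $\alpha$, the frame is popped and $\alpha$ is recorded as the answer to the pending query in $(y_{k-1},\sigma_{k-1})$; and if the root $A(x)$ halts, we set $\mathsf{status}=\mathsf{done}$ and make $S$ the identity. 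Every non-completed transition also increments $c$, and we let $V$ simply output $c$. Since the stack depth is bounded by $n$, each $\sigma_i$ has size $\mathrm{poly}(|y_i|)$, and the counter never exceeds $\prod_{k\le n}T(k)\le T(n)^n$ (fitting in $O(n\log n)$ bits), the total state size is $\mathrm{poly}(n)$ and both $S$ and $V$ are computable by polynomial-size circuits.

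Correctness follows by induction on $|y|$: whenever a frame $(y,\cdot)$ is popped, the output $\alpha$ handed to the parent is a correct solution to $R$ on $y$, since the recursive subcalls operate by the same mechanism on strictly smaller instances and, by the induction hypothesis, supply correct oracle answers to the $A(y)$-simulation. Hence when the root frame is popped, the resulting completed state contains a correct solution to $R$ on $x$. Moreover the construction has no spurious solutions: at every non-completed state $V$ strictly increases under $S$, so the only states at which $V(S(v))\le V(v)$ can hold are the completed ones (fixed points of $S$, detected directly), and the only states at which $S(S(v))=S(v)\ne v$ can hold are those whose $S$-image is a completed state. In either case the solution to $R$ on $x$ is read off from the $\mathsf{status}=\mathsf{done}$ component of the state.

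The main obstacle will be encoding discipline rather than any deep combinatorics. One must carefully verify that the stack depth never exceeds $n$ (which is where the \emph{strict} decrease in d.s.r.\ query sizes is essential), that the frames and counter fit in polynomially many bits, and that $S$ is implementable as a polynomial-size circuit performing a single step of the uniform d.s.r.\ algorithm plus the push/pop bookkeeping. Once these ingredients are in place, the \textsc{Sink-of-DAG} instance has no spurious sinks or valuation violations by design, and so any solution to it automatically yields a solution to $R$ on $x$, establishing $R\in\PLS$.
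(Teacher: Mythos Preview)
Your proposal is a correct sketch of the reduction showing that every downward self-reducible $R\in\TFNP$ lies in $\PLS$, and your stack-of-frames encoding with a monotone counter is essentially the same idea as the paper's table-of-subcalls encoding in the proof of \cref{thm:dsr-in-PLS}. (One minor point you leave implicit: you must also specify what $S$ does on bitstrings that do not encode a valid stack state; the standard fix is to make $S$ the identity there, so that under the ``$S(v)\neq v$'' convention of \textsc{Sink-of-DAG} they are never solutions.)

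However, you have proved only one direction of the corollary. Showing $R\in\PLS$ gives the implication ``a hard d.s.r.\ problem in $\TFNP$ $\Rightarrow$ a hard problem in $\PLS$'' (the hard problem being $R$ itself). The converse, ``a hard problem in $\PLS$ $\Rightarrow$ a hard d.s.r.\ problem in $\TFNP$,'' does not follow from anything you wrote: it requires exhibiting a \emph{specific} $\PLS$-complete problem that is itself downward self-reducible, so that hardness of any $\PLS$ problem transfers to it. That is precisely the content of \cref{thm:ITER} (\textsc{ITER-with-source} is d.s.r.), and the paper's proof of the corollary simply invokes \cref{thm:ITER} and \cref{thm:dsr-in-PLS} together. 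Your write-up needs to add this second ingredient, or at least cite it, to close the biconditional.
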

\begin{proof}
This is essentially restatement of \cref{thm:ITER,thm:dsr-in-PLS} together. On one hand, if a problem in $\PLS$ is hard (either worst-case, or average-case with respect to some sampleable distribution) then any $\PLS$ complete problem is also hard (with the same notion of hardness carrying over); in particular, \textsc{ITER}~$\in \PLS$ is hard. Conversely if a d.s.r.\ problem in $\TFNP$ is hard, then by \cref{thm:dsr-in-PLS} the same problem is in $\PLS$.
\end{proof}

\begin{proof}[Proof of \cref{thm:dsr-in-PLS}]
If $R$ is downward self-reducible with algorithm $A'$, then there exists the following (inefficient) depth-first recursive algorithm $A$ for $R$: on input $x$, simulate $A'$ and for any oracle call to an instance $x'$ where $|x'| < x$, run $A(x')$ to obtain a solution, and then continue the simulation of $A'$. The idea of reducing $R$ to \textsc{Sink-of-DAG} is to represent its intermediate states of this algorithm as vertices of a DAG. Of course, this graph may have exponential size.

Intuitively, we will represent the intermediate states of $A$ as a list of the recursive instances invoked by $A$ up to the current depth of recursion, along with the solutions to already-solved recursive instances. We can assume without loss of generality that on inputs of length $n$, $A$ makes exactly $p(n)$ downward calls (this can be done by ``padding" the algorithm with unused additional calls to some fixed sub-instance) and that solutions have size $q(n)$.

The \textsc{Sink-of-DAG} vertex set will consist of strings $s \in \B^{P(n)}$ where $P(n) = n + q(n) + \sum_{i=1}^{n-1} p(n-i+1) \times (n-i+q(n-i))$ which we will call \emph{states} or \emph{nodes}. Note $P(n) < p(n)q(n)n^2$ and is hence polynomial size. We interpret a string $s \in \B^{P(n)}$ as a table $s[\cdot,\cdot]$ indexed by the depth of recursion $i \in \{0,\ldots,n-1\}$, and index $j$ of the recursive call at depth $i$, with $j \in [p(n-i+1)]$ (except for $i=0$ where $j$ can only be $1$). Each $s[i,j] \in \B^{n-i}\times B^{q(n-i)}$ can contain an instance of $R$, and possibly a solution. That is, $s[i,j]$ can be one of: (1) a pair $(\xi,\bot)$ of an $R$-instance $\xi$ and special symbol $\bot$, (2) a pair $(\xi,\gamma)$ of an $R$-instance $\xi$ and purported solution $\gamma$, or (3) the pair $(\bot,\bot)$ representing no instance.

\begin{algorithm}[h]
\caption{\textsc{Valid?} (Algorithm for whether state $s$ is valid in \cref{thm:dsr-in-PLS})}
\label{alg:valid}
\KwIn{\begin{itemize}
    \item State $s \in \B^{P(n)}$, $s[i,j] \in \B^{n-i}\times\B^{q(n-1)}$ for $i \in [n-1]$ and $j \in [p(n-i+1)]$.
    \item Input $x$ to search problem $R$. 
    \end{itemize}
}
\KwOut{\textsc{True} if state $s$ is valid for input $x$, else \textsc{False}}
\hrulefill \\
Set $(\xi,y) \gets  s[0,1]$\\
\lIf{$\xi \neq x$}
    {\KwRet{\textsc{false}}}
\lIf{$y \neq \bot$}
    {\KwRet{"$(x,y) \stackrel{?}{\in} R$"}}
Set $j \gets 1$\\
\While{$(x_j,y_j) := s[1,j] \neq (\bot,\bot)$}{
    \lIf{$x_j$ is not the next input queried by $A$ after $x_1,\ldots,x_{j-1}$ with solutions $y_1,\ldots,y_{j-1}$}
        {\KwRet{\textsc{False}}}
    \lIf{$y_j \neq \bot$ \bf{and} $(x_j,y_j) \notin R$ }
        {\KwRet{\textsc{False}}}
    Set $s_{x_j} \in \B^{P(n-1)}$ as follows: $s_{x_j}[0,1] = s[1,j]$, and $s_{x_j}[\iota,\kappa] = s[\iota+1,\kappa]$ for $\iota \in [n-1]$. \\
    \lIf{\textsc{Valid?}$(s_{x_j},x_j)$}
        {break while loop}
    \lElse {\KwRet{\textsc{False}}}
    Set $j \gets j+1$
    }
\For{$j'\in \{j+1,\ldots,p(n)\}$}{
    \lIf{$s[1,j']\neq(\bot,\bot)$}
        {\KwRet{\textsc{False}}}
    }
\KwRet{\textsc{True}}.

\end{algorithm}

On input $x$, the starting state $s_0$ is defined by setting $s_0[0,1] = (x,\bot)$ and $s[i,j] = (\bot,\bot)$ for all other $i,j$. We say that a state $s$ is \emph{valid} for input $x$ based on the following recursive definition: $s[0,1] = (x,\bot)$, and there exists $j \in [p(n)]$ (where $n=|x|$) such that 
\begin{enumerate}
\item $s[1,j'] = (\bot,\bot)$ for all $j' > j$,
\item for $j' \leq j$, $s[1,j'] = (x_{j'},y_{j'})$ where $y_{j'}$ is a correct solution for $x_{j'}$ (except for $j' = j$ where $y_{j'}$ can be $\bot$),
\item for each $j' \leq j$, $x_{j'}$ is the next $(n-1)$-size subquery made by $A$ on input $x$ given previous subqueries $x_{1},\ldots,x_{j'-1}$ and  respective solutions $y_{1},\ldots,y_{j'-1}$,
\item if $y_j = \bot$, then letting $s_{x_j} \overset{\Delta}{=} s[i',j']$ be the table defined by $s_{x_j}[0,1]=(x_j,\bot)$ and $s[i',j'] = s[i'+1,j']$ for $i' \in [n-2]$ and $j'\in p((n-1)-i'+1)$, $s_{y_j}$ is a valid state for input $x_j$.
\end{enumerate} 
See \cref{alg:valid} for the pseudocode for checking if a state is valid.

Condition 4 says that, letting $s_{x_j}$ be the subtable of $s$ corresponding to the computation of $A$ to solve the instance $x_j$-- that is, $(x_j,\bot)$ in its first cell $s_{x_j}[0,1]$, and the rest of $s[\iota,j]$ for $\iota > 2$-- $s_{x_j}$ is valid.

Validity can be checked efficiently by the recursive algorithm implicit to this definition; its time complexity is $T(n) \leq \text{poly}(p(n)) + T(n-1) = \text{poly}(n)$.

\begin{algorithm}[h]
\caption{$S$ (Algorithm for the successor circuit $S$ in \cref{thm:dsr-in-PLS})}
\label{alg:successor}
\KwIn{\begin{itemize}
    \item State $s \in \B^{P(n)}$ where $s[i,j] \in \B^{n-i}\times\B^{q(n-1)}$ for $i \in [n-1]$ and $j \in [p(n-i+1)]$.
    \item Input $x$ to search problem $R$, and $n := |x|$.
    \end{itemize}
}
\KwOut{Successor state of state $x$ on input $x$}
\hrulefill \\
\lIf{\textsc{valid?}$(s,x)=$\textsc{False}}
    {\KwRet{$s$}}
Set $(x,y) \gets s[0,1]$\\
\lIf{$y\neq \bot$}{\KwRet{$s$}}
Let $j \in [p(n)]$ such that $\forall~j'>j$, $s[1,j']=(\bot,\bot)$\\
Set $(x_j,y_j) \gets s[1,j]$\\
\uIf{$j = p(n)$ \bf{and} $y_j \neq \bot$}
    {Simulate $A(x)$ after queries $x_1,\ldots,x_j$ with solutions $y_1,\ldots,y_j$ to obtain $y$ such that $(x,y) \in R$.\\
    Let $s' \in \B^{P(n)}$ with $s'[0,1]=(x,y)$ and $(\bot,\bot)$ elsewhere.\\
    \KwRet{$s'$}.
    }
\uElseIf{$y_j\neq \bot$}
    {Simulate $A(x)$ after queries $x_1,\ldots,x_j$ with solutions $y_1,\ldots,y_j$ to obtain next query $x_{j+1}$.\\
    Let $s' := s$, but $s'[1,j+1]:=(x_{j+1},\bot)$. \\
    \KwRet{$s'$}.
    }
\Else{
    Let $s_{x_j} \subset s$ be the subset of cells of $s$ as indexed by $s_{x_j}[0,1] = s[1,j]$, and $s_{x_j}[\iota,\kappa] = s[\iota+1,\kappa]$ for $\iota \in [n-1]$. \\
    Let $s_{x_j}' = S(s_{x_j},x_j)$.\\
    Let $s':=s$ with subrange $s_{x_j}$ replaced by $s_{x_j}'$.\\
    \KwRet{$s'$}.
    }
\end{algorithm}

The successor function $S$ is also defined recursively. For invalid states the successor function returns the state itself; as they cannot be reached this makes them isolated nodes that cannot be a solution to the \textsc{Sink-of-Dag} instance. For valid states, let $(x,y) = s[0,1]$. If $y\neq \bot$ is a solution, then $S(s) := s$ (the node is a sink as it contains a solution to $x$). Otherwise, if $y=\bot$ (and $s$ represents an intermediate state of the algorithm $A$), let $j \in p[n]$ be the index of the last subquery (of size $n-1$) encoded by $s$, exactly as in conditions (2)-(4) in the algorithm for validity, with corresponding cell $(x_j,y_j) = s[1,j]$. If $y_j \neq \bot$ and $j < p(n)$, we simulate $A$ to obtain the next subquery $x_{j+1}$ of size $n-1$ (given the previous subqueries $x_1,\ldots,x_j$ with solutions $y_1,\ldots,y_j$) and set the $[1,j+1]$-cell to $(x_{j+1},\bot)$ (that is, $S(s)[1,j+1] = (x_{j+1},\bot)$ and $S(s)[i',j'] = s[i',j']$ for all other $(i',j')\neq (1,j+1)$). Note that this part is non-recursive (since by definition, $x_{j+1}$ is the very next recursive call made by $A$) and requires polynomial time. If $y_j \neq \bot$ and $j = p(n)$, we similarly simulate $A$ to obtain the final solution $y$ to $x$ since all the subqueries made by $A$ on input $x$ have been obtained with their answers; we set $S(s)[1,0] = (x,y)$ and set all other cells to $(\bot,\bot)$. Finally if $y_j = \bot$, $s$ represents an intermediate state of $A$ currently at a deeper level of recursion. Construct $s_{x_j}$ as in the validity algorithm ($s_{x_j}$ is a subset of the cells of $s$, namely $(x_j,\bot)$ as cell $[0,1]$, and the rest of $s$ at levels $2$ and beyond); let $\sigma_{x_j}=S(s_{x_j})$ be the successor of $s_{x_j}$ when simulating $A$ on $s_{x_j}$; $S(s)$ returns $s$, replacing the subset of cells corresponding to $s_{x_j}$ with $\sigma_{x_j}$. Observe that the time to compute the successor is at most the time to simulate $A$ between recursive calls or the time to compute the successor for a smaller sub-instance, that is $T_s(n) \leq \text{poly}(n) + T_s(n-1) = \text{poly}(n)$. See \cref{alg:successor} for the pseudocode for checking if a state is valid.

Finally, for valid nodes, the potential circuit $V$ will return how far the state is in the depth-first recursive algorithm. The position $\pi(s) \in \B^{cn^2}$ of a valid state $s$ in the simulation of $A$ can be computed as follows. Let $j_i$ be the last cell at each level $i=1,\ldots,n-1$ that is not $(\bot,\bot)$ (i.e., $j_i$ is the last index such that $s[i,j_i]$ contains a subquery). If $s$ is a solution (sink), $\pi(s) =\Pi(n)$, and otherwise $\pi(s) = 1 + \sum_{i=1}^{n-1} \min\{1,j_i\}+\max\{0,j_i-1\}\times\Pi(n-i)$.

It is clear that the only valid node to not have a valid successor is the one corresponding to the final configuration of $A$, with solution $y$, a solution to the \textsc{Sink-of-DAG} instance gives a solution to $x$ (by reading $(x,y) = s[0,1]$).
\end{proof}

In \cref{sec:dsr,sec:dsr-problems} we discussed the related notion of circuit-d.s.r.\ in which the sub-instances to the oracle need not be smaller, but the number of input or output bits decreases. While we do not know the exact connection between these notions, in particular whether they are equivalent for circuit problems, the following theorem establishes that it is enough for a $\TFNP$ problem to be downward self-reducible in either sense to guarantee membership in $\PLS$.

\begin{theorem} \label{thm:cdsr-pls} Let $R$ be a circuit problem in $\TFNP$ that is circuit-d.s.r.\ with polynomial blowup. Then $R\in\PLS$.
\end{theorem}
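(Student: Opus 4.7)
The plan is to adapt the proof of \cref{thm:dsr-in-PLS} almost verbatim, using the state-as-DAG-vertex construction, while carefully tracking circuit sizes to ensure the entire state representation remains polynomially bounded. As before, let $A'$ be the polynomial-time circuit-d.s.r.\ algorithm for $R$, and consider the inefficient depth-first recursive algorithm $A$ that on input circuit $C$ simulates $A'$ and recurses on each oracle query. The vertices of the \textsc{Sink-of-DAG} instance will represent the intermediate configurations of this recursion, and the successor circuit will advance the recursion by one step exactly as in \cref{alg:successor}.

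The key change is the measure of progress. Now a query $C'$ of $A'$ on $C$ satisfies $\nu + \mu < n + m$ and $|C'| \leq |C| + q(nm)$ for some fixed polynomial $q$ (by the polynomial-blowup hypothesis). Hence the recursion depth is at most $n+m-1$, and a circuit appearing at depth $d$ has size at most $|C| + d \cdot q(nm) \leq |C| + (n+m)\cdot q(nm) =: B$, which is polynomial in the original input size. Correspondingly, the number of sub-queries issued at any level of recursion is bounded by $p(B)$ for a fixed polynomial $p$ (the runtime of $A'$ on circuits of size $\leq B$), and solutions are of length at most $q(B)$.

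With these bounds in hand, I encode a state as a table $s[i,j]$ with $i \in \{0,\ldots,n+m-1\}$ and $j \in [p(B)]$, where each cell $s[i,j]$ stores either $(\bot,\bot)$ or a pair $(C_{i,j}, y_{i,j})$ consisting of a sub-circuit of size $\leq B$ and either a purported solution $y_{i,j}$ to $C_{i,j}$ or the symbol $\bot$. The total length of this encoding is $O((n+m)\cdot p(B) \cdot (B + q(B))) = \text{poly}(|C|,n,m)$, so states are polynomially sized strings and the successor/validity computations remain polynomial time, directly mirroring \cref{alg:valid,alg:successor}. A sink of the resulting \textsc{Sink-of-DAG} instance encodes $(C,y) = s[0,1]$ with $y$ a solution for $C$, yielding the desired reduction.

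The main obstacle is simply the bookkeeping around variable-sized sub-circuits: in the bit-d.s.r.\ case the sub-instance sizes are rigidly determined by depth ($n-i$), whereas here two sub-queries at the same recursion depth may have different input/output dimensions and different circuit sizes (as long as each individually shrinks $\nu+\mu$). I handle this by padding every cell up to the worst-case bound $B$ (in circuit representation) and up to $n+m-i$ for the input/output profile, so that the state encoding has a fixed schema. Verifying that a purported next sub-query $C_{i,j}$ is the one $A'$ would actually issue given the previous cells is then a polynomial-time simulation of $A'$ on the circuit stored at the parent cell, exactly as in the proof of \cref{thm:dsr-in-PLS}; isolating invalid states as self-loops ensures they cannot be spurious sinks.
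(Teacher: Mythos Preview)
Your proposal is correct and follows essentially the same approach as the paper: both adapt the state-table construction from \cref{thm:dsr-in-PLS}, replacing the bit-length recursion measure by $n+m$, bounding the depth by $n+m-1$, and using the polynomial-blowup hypothesis to bound the size of depth-$i$ subcircuits by $|C|+i\cdot\text{poly}(nm)$ so that the whole state encoding stays polynomial. Your explicit handling of variable-sized subcircuits via padding to a uniform bound $B$ is a detail the paper leaves implicit, but the overall argument is the same.
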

\begin{proof}
Let $A'$ be the algorithm from the definition of circuit-d.s.r. As in d.s.r., $A'$ can be used to construct an (inefficient) depth-first recursive algorithm $A$ that solves $R$: simulate $A'(C)$, and on oracle calls to a sub-instance $C'$, run $A(C')$. Critically, $A$ terminates even though $|C'|$ may be greater than $|C|$, because the sum of input and output bits $n+m$ decreases with each level of recursion.

The reduction to \textsc{Sink-of-DAG} is the same as in the proof of \cref{thm:dsr-in-PLS}: the graph will represent intermediate states of the depth-first recursive algorithm $A$. However, the size of states will be different. Without loss of generality assume $R$ makes exactly $p(|C|)$ sub-instance calls, and if $R$ is circuit-d.s.r.\ with polynomial-blowup, we have that each sub-instance has size at most $|C|+(nm)^c$ for some $c > 0$. Hence, for a circuit of size $|C|$ we have $P(|C|) = |C|+ n + \sum_{i=1}^{n+m-1} p(|C|+i(nm)^c)(|C|+i(nm)^c)n \leq \text{poly}(n,m)$ bits suffices to represent intermediate states. 
\end{proof}

Note that the size of representing intermediate states of $A$ is where it is critical that $R$ be circuit-d.s.r with polynomial blowup; if instead, for instance, we had the guarantee that sub-instances have size $f(|C|)$, then at depth-$i$ sub-instances would have size $f^{(i)}(|C|)$ which even for the modest function $f(\alpha) = 2\alpha$ would give sub-instances of size $2^i |C|$, which are too big to represent as \textsc{Sink-of-DAG} nodes. 

Finally, we can give a characterization of $\PLS$-hardness analogous to \cref{cor:dsr}:
\begin{corollary}
A problem in $\PLS$ is hard if and only if there exists a hard circuit problem in $\TFNP$ that is circuit-d.s.r.\ with polynomial blowup.
\end{corollary}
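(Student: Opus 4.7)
The plan is to mirror the structure of the proof of \cref{cor:dsr} exactly, replacing the appeal to \cref{thm:ITER} and \cref{thm:dsr-in-PLS} with their circuit analogues proved earlier in the section, namely that \textsc{ITER}, \textsc{Sink-of-DAG}, and \textsc{Sink-of-DAG-with-source} are circuit-d.s.r.\ with polynomial blowup and that \cref{thm:cdsr-pls} places every such circuit problem in $\TFNP$ inside $\PLS$. So the proof is just two short implications, one in each direction.

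For the forward direction, I assume some problem $L \in \PLS$ is hard (worst-case, or average-case with respect to some efficiently sampleable distribution). Since $\PLS$-completeness reductions are polynomial-time and preserve hardness, any $\PLS$-complete problem is hard with the same notion of hardness. In particular, I take \textsc{ITER} (or \textsc{Sink-of-DAG}) as the witness: it lies in $\TFNP$, it is a circuit problem, and by the results of \cref{sec:dsr-problems} it is circuit-d.s.r.\ with polynomial blowup. This yields a hard circuit problem in $\TFNP$ with the required property.

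For the converse, suppose $R$ is a hard circuit problem in $\TFNP$ that is circuit-d.s.r.\ with polynomial blowup. Then \cref{thm:cdsr-pls} gives a polynomial-time reduction from $R$ to \textsc{Sink-of-DAG}, so $R \in \PLS$. Hardness of $R$ (in either sense) therefore exhibits a hard problem in $\PLS$. There is essentially no obstacle here: everything has already been set up, and one only has to observe that the sampleable distribution on which $R$ is hard transports, via the polynomial-time many-one reduction, to a sampleable distribution witnessing average-case hardness of \textsc{Sink-of-DAG}. This completes both directions.
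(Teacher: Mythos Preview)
Your proof is correct and follows essentially the same approach as the paper, which simply states that the corollary is a restatement of \cref{thm:cdsr-pls,thm:cdsr-problems} together. If anything, your version is slightly more complete, since you explicitly invoke the results of \cref{sec:dsr-problems} showing that \textsc{ITER} and \textsc{Sink-of-DAG} are circuit-d.s.r.\ with polynomial blowup, whereas the paper's citation of \cref{thm:cdsr-problems} only covers their $\PLS$-completeness.
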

\begin{proof}
This is a restatement of \cref{thm:cdsr-pls,thm:cdsr-problems} together.
\end{proof}

\section{Unique TFNP and Downward self-reducibility}\label{sec:tfup}

Several highly structured problems in $\TFNP$ have the additional proprety that for every input, the solution is \emph{unique}. Most of these problems are algebraic in nature-- such as factoring, and finding the discrete logarithm of an element of a cyclic group for a certain generator. We find, perhaps surprisingly, that d.s.r.\ problems with the additional condition of unique solutions are not only in $\PLS$, but in $\CLS$! 

\begin{definition}[\cite{HubacekV2021}]
\textsc{TFUP}, for ``unique" $\TFNP$, is the class of problems in $\TFNP$ where for each input $x$ there is exactly one solution $y$.
\end{definition}


\begin{definition} \label{def:svl}
\textsc{Sink-of-Verifiable-Line} (SVL): given a successor circuit $S:\B^n\rightarrow \B^n$, source $s\in \B^n$, target index $T \in [2^n]$ and verifier circuit $V:\B^n\times [T] \rightarrow \B$ with the guarantee that for $(x,i) \in \B^n\times[T]$, $V(x,i) = 1$ iff $x = S^{i-1}(s)$, find the string $v \in \B^n$ s.t. $V(v,T)=1$.
\end{definition}

SVL is a \emph{promise} problem; in particular, there is no way to efficiently check the guarantee about the verifier circuit $V$. A solution is guaranteed to exist if the promise is true, and the solution must be unique. It is possible to modify the problem (by allowing solutions that witness an ``error" of the verifier circuit) in order to obtain a search problem in $\TFNP$, but for our purposes SVL is sufficient:


\begin{theorem}[\cite{FearnleyGMS2020}] \label{thm:svl-cls}
SVL is polynomial-time reducible to $\UEOPL$.
\end{theorem}

Bitansky, Paneth and Rosen \cite{BitanskyPR2015}, building on the results of Abbot, Kane and Valiant \cite{AbbotKV2004}, used a reversible pebbling game (also known as the east model) to make the computation of the successor circuit reversible and show that SVL is polynomial-time reducible to $\PPAD$ (see \cite[Section 4.3]{AbbotKV2004} and \cite[Section 3]{BitanskyPR2015}). Huba\'{c}ek and Yogev \cite{HubacekY2020} then strengthened this result by showing that the same reduction shows SVL is polynomial-time reducible to $\CLS$. Finally, in recent work this was strengthened by observing that the same reduction works for $\UEOPL$ \cite{FearnleyGMS2020}.


\begin{lemma}
Every d.s.r.\ problem and every circuit-d.s.r.\ circuit problem in $\TFUP$ reduces to SVL.
\end{lemma}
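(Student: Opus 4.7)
The plan is to adapt the Sink-of-DAG construction of \cref{thm:dsr-in-PLS} so that its chain of valid states becomes an SVL line. The critical observation is that when the underlying search problem lies in $\UTFNP$, the unique-solution guarantee makes the predecessor of every valid state equally unique: each valid state has exactly one valid successor \emph{and} exactly one valid predecessor. Hence the valid states of the \cref{thm:dsr-in-PLS} construction already arrange themselves into a linear chain $s_0 \to s_1 \to \cdots \to s_{T^*-1}$, with $s_{T^*-1}$ the unique state carrying the recovered solution in cell $[0,1]$.

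I would take the source $s$ of the SVL instance to be this $s_0$, and the successor circuit to be the one from \cref{alg:successor}, extended by a self-loop once a state carrying the solution at $[0,1]$ is reached. The target index $T$ would be set to a generous upper bound on the chain length (e.g., $2^{P(n)}$), so that $S^{T-1}(s)$ is the solution-carrying state padded by self-loops; the solution itself can then be read off from cell $[0,1]$. All that remains is to supply a polynomial-time verifier $V$ satisfying the SVL promise that $V(x,i)=1$ iff $x=S^{i-1}(s)$.

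To build $V$, I would extend the state by a global step counter $c$ (maintained by $S$) together with a time annotation $\tau_{k,j}$ at every completed recursion cell $s[k,j]$, recording the number of primitive steps that sub-computation consumed. The successor writes $\tau_{k,j}$ at the moment of ascent from a recursive call, when the sub-state about to be collapsed still exposes its length via its own counter. With these annotations part of the state, the position of any syntactically well-formed state can be recomputed in polynomial time by summing $\tau$'s across completed cells and adding the per-transition overheads. The verifier would then accept $(x,i)$ iff $x$ passes a structural validity check patterned after \cref{alg:valid}, the counter in $x$ equals $i-1$, and the position read off from the annotations agrees with the counter.

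The hard part will be making $V$ correct on \emph{all} inputs, not merely those reachable from $s_0$: an adversarial $x$ could in principle pair a valid skeleton with an incorrect $\tau_{k,j}$, and the SVL promise demands rejection even though the skeleton alone looks fine. This is where $\UTFNP$ buys what $\TFNP$ did not in \cref{thm:dsr-in-PLS}: because each recursively invoked sub-instance has a \emph{unique} solution, the sequence of queries $A$ must issue at every level is determined by the recorded solutions, and so is the true value of $\tau_{k,j}$. Cross-consistency checks that tie $\tau_{k,j}$ to the (deterministically implied) structure of the sub-computation at cell $[k,j]$, together with the $\TFNP$ verification relation applied to the recorded solutions, should suffice to rule out adversarial annotations. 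The circuit-d.s.r.\ variant goes through identically, using the enlarged state-space bookkeeping from the proof of \cref{thm:cdsr-pls}.
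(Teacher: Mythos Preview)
Your high-level plan is right and matches the paper: reuse the \cref{thm:dsr-in-PLS} state graph, observe that uniqueness of solutions forces the valid states into a single line, and supply an SVL verifier. Where you diverge is in the mechanism for computing a state's position along that line.

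You propose augmenting each state with a running step counter $c$ and per-cell duration stamps $\tau_{k,j}$, then verifying their consistency. The paper does something simpler that sidesteps the issue you flag as ``the hard part.'' Recall that in \cref{thm:dsr-in-PLS} the d.s.r.\ algorithm was already padded to make \emph{exactly} $p(m)$ sub-queries on every input of length $m$. Consequently the length of any depth-$i$ sub-computation is a fixed number $\Pi(n-i)=p(n-i)\cdot p(n-i-1)\cdots p(1)$ depending only on the level $i$, not on which sub-instance sits there. The position $\pi(s)$ of a valid state is therefore a closed-form expression in the indices $j_1,\ldots,j_{n-1}$ of the last non-$(\bot,\bot)$ cell at each level, and the target is taken to be $T=\Pi(n)$ exactly. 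No counters, no annotations, and hence no adversarial-annotation problem: the verifier simply checks that $s$ is valid (via \cref{alg:valid}) and that $\pi(s)=i$.

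Your route is not wrong in spirit, but the step you leave as ``should suffice''---certifying each $\tau_{k,j}$ against a sub-computation whose internal structure has already been collapsed to the pair $(x_j,y_j)$---is precisely the step the padding observation renders unnecessary; without it, it is not clear how to check $\tau_{k,j}$ in polynomial time, since the only witness to its value is an exponentially long computation. There is also a small inconsistency in your SVL packaging: with $T=2^{P(n)}$ and a self-loop at the sink, $S^{i-1}(s_0)$ equals the sink for every $i\ge T^*$, yet your verifier tests ``counter $=i-1$'' and would therefore reject $(\text{sink},T)$, violating the SVL promise. Either the counter must keep incrementing through the self-loop (which blows up the state space unless you already know $T^*$), or---as the paper does---you set $T=\Pi(n)$ on the nose and dispense with the padding.
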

\begin{proof}
Given a d.s.r.\ (or circuit-d.s.r.) problem $R$, we construct a graph with successor circuit $S$ exactly as in the proof of \cref{thm:dsr-in-PLS} or \cref{thm:dsr-in-PLS}. The key observation is that if $R$ has unique solutions, we can construct a verifier circuit $V$ as in \cref{def:svl} for which the SVL promise holds true.

Let $\Pi(n)$ denote the total length of the path from the initial state $s_0$ to the unique sink in the \textsc{Sink-of-DAG} instance from the earlier theorem proofs; we have $\Pi(n) = p(n) \times \Pi(n-1) = p(n) \times p(n-1) \cdots \times p(1)$. While $\Pi(n)$ may be exponential it can be represented with $\sum \log(p(n-i)) = O(n^2)$ bits; that is, the distance of each valid node from $s_0$ in the graph can be represented with $cn^2$ for some constant $c$. The position $\pi(s) \in \B^{cn^2}$ of a valid state $s$ in the simulation of $A$ can be computed as follows. Let $j_i$ be the last cell at each level $i=1,\ldots,n-1$ that is not $(\bot,\bot)$ (i.e., $j_i$ is the last index such that $s[i,j_i]$ contains a subquery). If $s$ is a solution (sink), $\pi(s) =\Pi(n)$, and otherwise $\pi(s) = 1 + \sum_{i=1}^{n-1} \min\{1,j_i\}+\max\{0,j_i-1\}\times\Pi(n-i)$. Equivalently this can be computed recursively; using the sub-table notation from \cref{thm:dsr-in-PLS}, where $j$ is the last cell at level $1$ that is not $(\bot,\bot)$, $\pi(s) = 1+(j-1)\Pi(n-1)+\pi(s_{x_j})$. Either way, $\Pi$ and $\pi$ can be computed efficiently. Since the sink is the $\Pi(n)$-th state, the SVL instance will have target $T=\Pi(n)$.

Critically, since $R \in \TFUP$, for each $i \in [\Pi(n)]$ there is exactly one state $s\in \B^{P(n)}$ such that $\pi(s) = i$: to see this, for every sub-instance $x'$ invoked by $A$, there is a unique solution $y'$, and given a sequence of previous sub-instances and solutions at a given level of recursion, the next sub-instance is uniquely determined by $A$. We define $V(s,i) = 1$ if $s$ is a valid state (as in the proof of \cref{thm:dsr-in-PLS}) and $\pi(s) = i$.
\end{proof}

As a consequence of \cref{thm:svl-cls} and the previous lemma, we have the following theorem (restated from the introduction).

\tfup*

This result yields an interesting application to the study of number-theoretic total problems such as factoring. It is an open problem whether there exists an algorithm for factoring that uses oracle calls on smaller numbers. The following observations are evidence that the answer may be no. 

\begin{definition}
\textsc{AllFactors} is the problem of, given an integer, listing its non-trivial factors in increasing order, or ``prime" if it is prime. \textsc{Factor} is the problem of returning \emph{a} non-trivial factor, or ``prime" if it is prime.
\end{definition}

\textsc{AllFactors} is in $\TFUP$. \textsc{AllFactors} and \textsc{Factor} are almost the same problem: \textsc{Factor} is Cook reducible to \textsc{AllFactors} with one oracle call and \textsc{AllFactors} is Cook reducible to \textsc{Factor} with $\log$(input size) oracle calls. A consequence of \cref{thm:tfup-cls} is the following:

\factor*

\begin{proof}
If either problem is downward self-reducible, so is the other. Given a downward self-reduction $A$ of \textsc{Factor}, to solve \textsc{AllFactors} on input $n$, run $A$ (replacing its downward oracle calls with those for \textsc{AllFactors}) to obtain factor $m$, and then run the oracle on $\frac{n}{m}$. Given a downward self-reduction $B$ of \textsc{AllFactors}, to solve \textsc{Factor} on input $n$ run $B$ and return any factor-- replace downward oracle calls to \textsc{AllFactors} with ones to \textsc{Factor} and at most $\log(\log(n))$ additional downward queries. Hence if either problem is d.s.r., \textsc{AllFactors} $\in \UEOPL$, but since \textsc{Factor} $\leq_p$ \textsc{AllFactors}, \textsc{Factor} $\in \UEOPL$.
\end{proof}

\textsc{Factor} $\in \UEOPL$, and hence in $\CLS$, would be a surprising consequence to $\TFNP$ community. While it has been shown that \textsc{Factor} can be reduced to the $\TFNP$-subclasses $\textsc{PPA}$ and $\PPP$, the intersection of which contains $\PPAD$, it has been a long outstanding question whether \textsc{Factor} $\overset{?}{\in} \PPAD$. An algorithm for factoring $N$ which queries numbers at most $N/2$ (this is what it means for the input size to shrink by at least one bit, but note that any number that could contain a non-trivial factor of $N$ is at most $N/2$) would immediately place \textsc{Factor} not just in $\PPAD$, but also in $\PLS$ (since $\CLS = \PPAD~\cap~\PLS$)!

\section{Discussion and Open Problems}\label{sec:disc}
In this paper we have initiated the study of downward self-reducibility, and more broadly self-reducibility, in $\TFNP$. Naturally, a host of questions remain:
\begin{itemize}
\item What is the relationship between downward self-reducibility and circuit-downward self reducibility for circuit problems in $\TFNP$?
\item What other problems in $\PLS$ are downward self-reducible? Is it possible every $\PLS$-complete problem is downward self-reducible? As a first step, it would be helpful to prove that the suite of $\PLS$-complete local constraint/clause maximization problems (where a specific neighborhood function is specified as part of the problem, such as \textsc{Flip} or \textsc{Kernighan-Lin}) are d.s.r.
\item Another important notion of self-reducibility is that of {\em random} self-reducibility (r.s.r.): a problem $R$ is r.s.r. if it has a worst-to-average case reduction. The details vary but one definition is $R$ is r.s.r. if it has a randomized Las Vegas algorithm if there exists an algorithm solving $R$ on $1/\text{poly}(n)$-fraction of inputs. Some very important algebraic problems in $\TFNP$, such as the discrete logarithm, and RSA inversion, are r.s.r. These problems are also in $\PPP$, so a natural starting point is to ask whether all r.s.r. problems in TFNP are in $\PPP$.
\end{itemize}

\ifitcs
\bibliographystyle{myplainurl}
\else 
{\small
\bibliographystyle{prahladhurl.bst}
\fi 

\bibliography{HMR-bib}

\ifitcs\else 
}\fi 

\end{document}

